%% file: main.tex
\documentclass[sigconf,nonacm]{acmart}

\usepackage{multirow}
\usepackage{tabularx}
\usepackage{amsmath}
\usepackage{tikz}
\usepackage{pgfplots}
\pgfplotsset{compat=1.16}
\usepackage[ruled,vlined,linesnumbered]{algorithm2e}
\usepackage{algpseudocode}
\usepackage{subcaption}
\usepackage{xcolor} 

\usepackage{listings}
\usepackage{xcolor}

\definecolor{etalePurple}{HTML}{4B2E83}
\definecolor{etaleGold}{HTML}{85754D}
\definecolor{etaleComment}{HTML}{6B6B6B}
\definecolor{etaleBg}{HTML}{FBFAFD}

\lstdefinestyle{etalecpp}{
  language=C++,
  basicstyle=\ttfamily\footnotesize,
  keywordstyle=\color{etalePurple}\bfseries,
  commentstyle=\color{etaleComment}\itshape,
  stringstyle=\color{etaleGold},
  numberstyle=\tiny\color{etaleComment},
  backgroundcolor=\color{etaleBg},
  frame=single,
  rulecolor=\color{etalePurple},
  framesep=6pt,
  breaklines=true,
  showstringspaces=false,
  columns=fullflexible,
  keepspaces=true,
  morekeywords={uint32_t,uint64_t,int,float,bool,constexpr,struct,using,size_t,volatile,inline,return,const}
}

\newtheorem{proposition}{Proposition}

\theoremstyle{definition}

\theoremstyle{remark}
\newtheorem{remark}{Remark}

\usepackage{pifont}
\newcommand{\cmark}{\ding{51}}
\newcommand{\xmark}{\ding{55}}

\begin{document}

\title{SHEAF: Self-profiled Hardness Estimation from Answer-set Flux for Predicting Query Hardness in Graph-based ANN Search}

\author{Dongfang Zhao (dzhao@uw.edu)}
\affiliation{%
  \city{}
  \country{}
}

\keywords{Approximate nearest neighbor search, graph-based index, query hardness
  estimation, local intrinsic dimensionality}

\begin{abstract}
Graph-based approximate nearest neighbor (ANN) search is usually governed by a beam-width parameter
that trades recall for throughput and is fixed for the whole workload. 
Yet, queries may not be equally hard: for example, on the widely used data set SIFT1M,
the beam that a query needs to reach 95\% recall varies by more than $32\times$. 
Therefore, serving each query at its own width would help if
the system could tell, cheaply and in advance, how hard it is. 
The prevailing proxy for this difficulty is called local intrinsic dimensionality (LID);
however, LID is static and geometric, which makes it only weakly predict the minimum beam.

This paper presents a new measure, namely Self-profiled Hardness Estimation from Answer-set Flux (SHEAF),
which represents a query's hardness as how much its own top-$k$ answer set changes between two shallow probe widths.
We design a self-profiling estimator that turns this flux into a deployable per-query beam predictor;
furthermore, we develop a fixed-probe evaluation protocol that scores each measure over all queries with an observed minimum sufficient beam. 
On popular ANN indexes such as CAGRA and HNSW across four diverse data sets,
SHEAF predicts the per-query beam better than five baseline measures on both GPU and CPU by up to $1.55\times$ in held-out correlation, using only two shallow probe searches and no query-time ground truth. 
\end{abstract}

\maketitle

\section{Introduction}
\label{sec:intro}
 
Graph-based approximate nearest neighbor (ANN) search is usually governed by multiple parameters;
one of the most important parameters is the search width, also known as the beam width.
On GPUs, cuVS CAGRA's \texttt{itopk\_size}~\cite{cagra}, like HNSW's \texttt{ef}~\cite{hnsw} on the
CPU, sets the beam width that trades recall for throughput.
In practice, the beam width is tuned once to place the
whole workload at a single point on the recall-throughput curve. 
However, this one-size-fits-all
setting ignores the fact that many queries in the real-world are not equally hard. 
As shown in Figure~\ref{fig:intro}(a), our experiments show that on CAGRA over SIFT1M~\cite{sift}, the minimum
\texttt{itopk} a query needs to reach recall@10 $\ge 0.95$ spans more than $32\times$ with
$61\%$ of queries satisfied at the smallest beam and a long tail needing an order of magnitude
more. 
Therefore, serving each query at its own width, i.e., small for the easy
majority and large only for the hard tail, would move the system off that uniform curve. Doing so
rests on one ingredient above all: a way to tell, cheaply and before committing the search
budget, how hard each query is. 

\begin{figure}[t]
  \centering
  \includegraphics[width=\columnwidth]{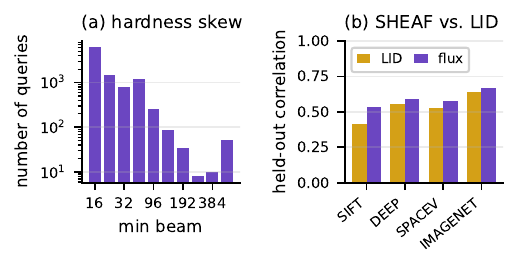}
  \caption{(a) Minimum beam on CAGRA (SIFT1M) spans $32\times$. (b) Flux outperforms LID at predicting
  query hardness.}
  \label{fig:intro}
\end{figure} 

One prevailing way to price a query is the so-called local intrinsic dimensionality
(LID)~\cite{houle_lid, amsaleg_lid}, 
which is a static geometric measure for
grading a query's difficulty~\cite{aumuller_lid_bench}. 
However, recent studies showed that LID-based approaches exhibit two limitations.
First, LID is a property of the data alone that is computed independently of the graph index, whereas on a
graph the effort to answer a query is governed by connectivity.
To fix it, the recent Steiner-hardness measure was proposed~\cite{steiner_hardness};
however, Steiner-hardness is an offline, ground-truth-dependent graph-native reference and requires each query's ground-truth neighbors,
which can be only obtained after the search rather than during it.
Second, existing adaptive-search methods read static data features~\cite{tao} or the running distance
sequence~\cite{li_adaptive_et, adaptive_beam}, but they do not directly correlate the stability of the top-$k$ answer set itself, which is a direct indicator of how much more work a query needs.
What is desired is a cheap, online,
ground-truth-free measure that actually quantifies a query's graph-search cost, i.e., beam width.

Our insight toward this goal is that a query's hardness is not only geometric but algorithmic: a query is
hard exactly when its own search has not yet settled, and more importantly, that state is already visible in the
search's early behavior without ground truth at query time. We formulate it as \emph{answer-set flux}, how much a
query's top-$k$ result changes between two shallow probe widths. A query whose
answer set is still churning is far from settled; one whose answer set has stopped moving, and
whose distances have stopped tightening, is a candidate for convergence that the estimator then
confirms.
Performance-wise, this flux is computed solely from the results of two shallow probe searches
without ground truth or access to the graph index.

This paper turns the above insight into a new quantitative measure, namely Self-profiled Hardness Estimation from Answer-set Flux (SHEAF). 
What
SHEAF contributes is a query-hardness measure and the machinery to read and evaluate it: flux
itself, a self-profiling estimator that turns it into a per-query cost prediction, and a protocol
that scores that prediction with a probe held below the hard tail's cost. An adaptive serving
policy is the application such a measure feeds; how accurately the measure predicts per-query
cost, which is what we study here, is what any such policy would rest on. 
In summary, this paper makes the following contributions.

\paragraph{We define answer-set flux as a dynamic measure of query hardness (\S\ref{sec:method:flux})}
SHEAF runs two searches for the same query. The first search uses a
  beam $w_0$. The second search uses a larger beam $w_1$, where $w_0<w_1$. Every query uses the
  same $w_0$ and $w_1$. Both widths lie at the low end of the beam ladder to keep the two searches
  cheap. We call these two searches the \emph{probe}. Flux is the fraction of the top-$k$ results at
  $w_0$ that are replaced at $w_1$. High
  flux means that the search result changes when more search budget is used. Low flux means that
  the result remains stable. Flux needs neither ground truth nor access to the index internals.
  A short set-counting argument shows that this simple yet effective flux upper-bounds the recall change between the two
  beam widths (Prop.~\ref{prop:churn}).

  \paragraph{We design a self-profiling estimator of per-query search cost (\S\ref{sec:method:estimator})}
  We train a lightweight regressor that takes flux and the
  returned distances from the probe as input. The regressor predicts the minimum beam needed for
  a target recall. 
  At query time, SHEAF runs the probe and predicts the required beam before the
  full search begins. Since every query uses the same $w_0$ and $w_1$, the relative
  probe cost approaches zero as the required beam increases, an asymptotic property we state as
  Prop.~\ref{prop:overhead}. These inputs come from standard top-$k$
  search results and do not require graph internals. The same estimator design therefore applies
  across hardware platforms, e.g., from GPU-based CAGRA to CPU-based HNSW.

\paragraph{We develop a fixed-probe protocol for evaluating search-based hardness (\S\ref{sec:method:eval})}
We use the same $w_0$ and $w_1$ for every query and evaluate each
  measure over all answerable queries. For the
  hard queries whose required beam is larger than $w_1$, the probe is guaranteed to not have reached the target recall. We formalize this property as Prop.~\ref{prop:decirc}.

We evaluate SHEAF across four diverse data sets against five baseline hardness measures. 
SHEAF outperforms every baseline on both GPU and CPU, with a margin over the prevailing geometric baseline, exact LID, reaching up to $1.55\times$ in held-out correlation.
As an example, Figure~\ref{fig:intro}(b) shows the advantage of SHEAF over the conventional LID.
 
\section{Related Work}
\label{sec:related}

\paragraph{Approximate nearest neighbor (ANN) search}
ANN research trades exactness for speed, and most methods fall into one of three families.
First, quantization compresses vectors so that distances are approximated cheaply, ranging from
product quantization~\cite{sift} and its optimized~\cite{tge_tpami13} and additive~\cite{jmart_eccv18}
variants to anisotropic~\cite{rguo_icml20} and recent theory-grounded~\cite{jgao_sigmod24} schemes,
consolidated in libraries such as Faiss~\cite{mdouz_arxiv24}. 
Second, locality-sensitive hashing
gives sublinear query time with probabilistic guarantees~\cite{pindy_stoc98, mdata_socg04}, refined
by multi-probe~\cite{qlv_vldb07} and near-optimal constructions~\cite{aando_neurips15}. 
Third,
space-partitioning trees~\cite{jbent_cacm75} and their auto-tuned toolkits~\cite{mmuja_tpami14}
remain common baselines. These indexes now underpin many applications, for example vector
databases~\cite{jwang_sigmod21, cwei_vldb20} and retrieval-augmented and recommendation
pipelines~\cite{plewi_neurips20, rying_kdd18}.

\paragraph{Graph-based indexes and the width knob}
Graph indexes have become one of dominant ANN paradigms.
They could build on navigable small-world graphs~\cite{ymalk_is14, hnsw} with monotonic or
spreading-out designs~\cite{cfu_vldb19, bharw_cvpr16}. 
Methods such as DiskANN~\cite{diskann},
HCNNG~\cite{jmuno_pr19}, and EFANNA~\cite{cfu_arxiv16} answer a query by a best-first walk over a
$k$-NN graph built by NN-descent and its variants~\cite{wdong_www11, nono_mm23}. Recent work adds
parallel and deterministic construction at scale~\cite{mmano_ppopp24}, sharper graph
convergence~\cite{li_convergent26}, and a theoretical analysis of greedy graph
search~\cite{prokhorenkova_theory20}. 
Across all of them, however, a single beam-width knob (CAGRA's \texttt{itopk\_size}, HNSW's \texttt{ef})~\cite{cagra}
trades recall for speed and is normally fixed for the whole workload, which places the system at one
point on the recall-throughput curve. SHEAF instead serves each query at its own width; this is the
premise that makes a per-query notion of hardness necessary.

\paragraph{GPU, large-scale, and dynamic indexes}
A parallel line of work scales graph ANN to modern hardware and workloads. 
On GPUs, SONG~\cite{wzhao_icde20}, CAGRA~\cite{cagra}, and GTS~\cite{yzhu_sigmod24}
exploit batched parallelism, joined by recent GPU construction and billion-scale search
systems~\cite{hwang_cikm21, zli_sigmod25, kvenk_tbd25, dzhao_hpdc26}. Disk-resident and streaming
variants~\cite{chen_spann21, mwang_sigmod24, ypan_bigdata23, yxu_sosp23, asing_arxiv21},
with dynamic or
attribute-filtered indexes~\cite{digra, qzhan_osdi23, lpate_sigmod24, sgoll_www23} keep
billion-scale and evolving datasets searchable. These systems, however, optimize how a chosen width is served, not how the width is chosen per query; the latter is the gap SHEAF addresses.
 
\paragraph{Intrinsic dimensionality and query hardness}
The prevailing way to reason about query difficulty is geometric, namely through local intrinsic
dimensionality (LID)~\cite{houle_lid, amsaleg_lid}. LID is the field's default hardness axis, and it is
used to stratify ANN benchmarks into easy and hard query sets~\cite{aumuller_lid_bench}. Its standard
structural counterpart is hubness, i.e., the tendency of a few points to dominate neighbor lists in
high dimension~\cite{radovanovic_hubness}. 
Closest to our diagnosis is
Steiner-hardness~\cite{steiner_hardness}, which argues that a query's cost is governed by
connectivity rather than density.
Steiner-hardness measures the query's cost as an NP-hard Directed-Steiner-Tree quantity
over a query's ground-truth neighbors; complementary theory characterizes the hardness of
navigable graphs themselves~\cite{khanna_navigable25}. SHEAF departs from all of these. We test
LID and hubness as predictors of actual per-query cost on real CAGRA, where both prove suboptimal.
We will use Steiner-hardness only as an offline, ground-truth-dependent graph-native
reference, since it cannot price a query at search time.

\paragraph{Adaptive and per-query search}
A final line of work makes the search itself adaptive, which is where SHEAF sits; however, it reads
different quantities. Learned routing augments graph vertices to escape local
minima~\cite{baranchuk_route19} and query-aware modules pick a better entry point per
query~\cite{jruan_kdd25}. 
Other work includes: Learned early-termination predicts when to stop from features of the
running search~\cite{li_adaptive_et}, Tao selects parameters from static data
features~\cite{tao}, and distance-adaptive beam search replaces the fixed stopping rule with a
relative-distance criterion~\cite{adaptive_beam}. 
Recent systems even detect and repair hard or
out-of-distribution queries on the fly~\cite{hua_hardness26}. 
SHEAF differs from all of these in
what it measures, not in the control policy. 
These methods learn from static features or read the
running distance sequence; SHEAF instead reads the stability of the top-$k$ answer
itself, i.e., how much it still changes as the probe widens. This is a direct, ground-truth-free
measure of how much the query's search has settled. 
To our knowledge, prior work has not explicitly applied a
fixed-probe control when evaluating such self-observed features, including a pre-target guarantee
on the hard tail.

\section{Methodology}
\label{sec:method}

The quantity that a measure predicts is a query's \emph{cost}: the smallest beam
width at which its own search reaches the recall target, e.g., recall@10 = 0.95. 
Formally, fix a base set $X\subset\mathbb{R}^d$,
a graph index built on it, and a neighborhood size $k$; for a query $q$, let $N=N_k(q)$ be its
true $k$ nearest neighbors. The graph search at beam width $w$ returns a top-$k$ set $R_w(q)$ with $k$-th returned distance
$r_w(q)$ and recall $\rho_w(q)=|R_w(q)\cap N|/k$. Fix a recall target $\tau$ and a finite width
  ladder $L=\{b_1<b_2<\cdots<b_m\}$ and let
  $A(q)=\{b\in L:\rho_b(q)\ge\tau\}$. For a query with $A(q)\ne\varnothing$, its cost is
\begin{equation}
  c(q)=\min A(q),
  \label{eq:cost}
\end{equation}
the minimum beam that reaches the target. We evaluate cost prediction only on these answerable
queries. Queries with $A(q)=\varnothing$ are right-censored (i.e., its true cost lies beyond the largest width) because no sufficient beam is observed
within the ladder. We exclude them from the correlation.


\subsection{Dynamic Difficulty Measure}
\label{sec:method:flux}

We read difficulty from the search's own early behavior rather than from the query's geometry.
The \emph{probe} is a cheap shallow search: the same graph search runs at two small widths
$w_0<w_1$. Both widths lie far below what hard queries need. The probe returns the top-$k$ sets
$R_0=R_{w_0}(q)$ and $R_1=R_{w_1}(q)$ together with their $k$-th distances $r_0$ and $r_1$.

\emph{Answer-set flux} summarizes how much the probe is still moving. Its primary component is
the \emph{set churn}
\begin{equation}
  \chi(q)=\frac{|R_0\triangle R_1|}{2k}\in[0,1],
  \label{eq:churn}
\end{equation}
where $R_0\triangle R_1$ is the symmetric difference of the two answer sets, namely the members
in exactly one of them. Churn is thus the fraction of the top-$k$ that changes as the beam widens
from $w_0$ to $w_1$. Flux also reads the \emph{distance improvement} $(r_0-r_1)/r_0$ and the probe's raw $k$-th and mean
distances at $w_0$. A query whose answer has stopped moving and whose distances have stopped
tightening looks close to convergence, and the estimator of \S\ref{sec:method:estimator} treats
it as such only when both cues agree. A churning
answer instead marks a query far from settled. All four flux features are byproducts of a search
the index already runs. None needs ground truth at query time.
 
Beyond correlating with convergence, churn upper-bounds the recall that the search can
still gain over the probe interval. 
Although the bound is seemingly elementary, it turns a heuristic into a
guarantee: a stable answer caps the recall that the interval can still contribute, whereas a moving
answer leaves that headroom open. The bound assumes nothing about the data distribution
or the graph. Moreover, it holds for every query and every pair of probe widths. With such generality, a two-width probe serves as a proxy for a property of the entire search. The following proposition states this formally.

\begin{proposition}[Churn bounds recall change over probe interval]
  \label{prop:churn}
  For any query and widths $w_0<w_1$ with $|R_0|=|R_1|=k$,
  \[
    \bigl|\rho_{w_1}(q)-\rho_{w_0}(q)\bigr|\ \le\ \frac{|R_0\triangle R_1|}{2k}\ =\ \chi(q).
  \]
\end{proposition}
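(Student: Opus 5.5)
The plan is a direct set-counting argument. The bound uses nothing about the graph or the data, only that $R_0$ and $R_1$ have the same size. Write $N=N_k(q)$, so that $k\,\rho_{w_i}(q)=|R_i\cap N|$. Then
\[
k\bigl(\rho_{w_1}(q)-\rho_{w_0}(q)\bigr)=|R_1\cap N|-|R_0\cap N| .
\]
Split each answer set into the part shared by both probes and the part private to one of them: $R_i=(R_0\cap R_1)\cup(R_i\setminus R_{1-i})$, a disjoint union. The shared part $(R_0\cap R_1)\cap N$ contributes equally to both counts and cancels. This leaves
\[
k\bigl(\rho_{w_1}-\rho_{w_0}\bigr)=|(R_1\setminus R_0)\cap N|-|(R_0\setminus R_1)\cap N| .
\]

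Next, bound this difference of two nonnegative integers. Its absolute value is at most the larger of the two terms. Each term is bounded by the size of its own private set, so the absolute value is at most $\max\bigl(|R_1\setminus R_0|,\,|R_0\setminus R_1|\bigr)$.

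This is the step that needs care, and it is where the hypothesis $|R_0|=|R_1|=k$ enters. Equal cardinalities give $|R_1\setminus R_0|=k-|R_0\cap R_1|=|R_0\setminus R_1|$, so both private sets have the same size. Since they are disjoint and together make up $R_0\triangle R_1$, each has size exactly $|R_0\triangle R_1|/2$.

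The naive bound would use the sum of the two private sets, which gives $|R_0\triangle R_1|$ and loses a factor of 2. Taking the maximum instead recovers the stated $2k$ denominator. Combining the steps and dividing by $k$ gives
\[
\bigl|\rho_{w_1}(q)-\rho_{w_0}(q)\bigr|\le \frac{|R_0\triangle R_1|}{2k}=\chi(q).
\]
The result is deterministic and holds for every query and every pair of widths. I would add a remark that the bound is tight: it is attained when every element swapped in belongs to $N$ and every element swapped out does not.
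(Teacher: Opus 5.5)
Your proof is correct and follows the paper's argument step for step: cancel the shared part $R_0\cap R_1$, use the equal-cardinality hypothesis to get that both private sets have size $k-|R_0\cap R_1|=|R_0\triangle R_1|/2$, and bound the difference of two counts lying in $[0,k-|R_0\cap R_1|]$ by that common size. Your tightness remark also matches the paper's; the paper adds the symmetric case, in which every entering element lies outside $N$ and every leaving one lies in $N$.
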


\begin{proof}[Proof sketch]
  Because $|R_0|=|R_1|=k$, the sets differ symmetrically:
  $|R_1\setminus R_0|=|R_0\setminus R_1|=k-|R_0\cap R_1|=|R_0\triangle R_1|/2$. The common
  part $R_0\cap R_1$ contributes equally to both recalls and cancels, so
  $\bigl||R_1\cap N|-|R_0\cap N|\bigr|\le|R_1\setminus R_0|$. Dividing by $k$ gives the
  claim. The complete proof can be found in Appendix~\ref{app:churn}.
\end{proof}

We run a small example to make the bound concrete. For $k=10$, a probe that changes two of the ten
answers has $|R_0\triangle R_1|=4$ and $\chi=0.2$.
Proposition~\ref{prop:churn} caps the recall still to be gained across the probe interval at
$0.2$, whether or not those two changes are true neighbors. A query that changes nothing
($\chi=0$) has provably gained no recall over the interval $[w_0,w_1]$ itself. This is a local
certificate, not a global one: it rules out recall gained inside the probe interval, but it does
not by itself say the search has reached the target $\tau$, since a stable answer can also mean
the probe has not yet begun to discriminate at all rather than that it has finished.

Nonzero churn is the weaker direction of this bound. A moving answer means recall \emph{may}
still change, yet the bound does not guarantee that it will, since the churn could reshuffle
non-neighbors and leave recall flat. \S\ref{sec:method:estimator} will show how the estimator
combines flux with the probe's distance features. A plateau is then trusted only when the fuller
prediction agrees, which keeps an early, uninformative probe from being mistaken for a converged
one.

Among these features, churn is the most important, while the distance terms are secondary. Churn
measures a change that the distances cannot see: two queries can have identical distance profiles
yet differ in how much their top-$k$ \emph{membership} shifts between the two probes. This shift
tracks a query's unfinished work. The distance terms still help on datasets whose operating beam
lies far beyond the probe, where they add what churn does not capture.

By default the probe uses $w_0{=}16$ and $w_1{=}24$. These widths are large enough for an answer
set to begin forming on most datasets, but small enough to stay below what a hard query needs. On
the hard tail, the probe therefore reports genuine non-convergence rather than a query's final
answer. A wider second width $w_1{=}32$ raises the correlation slightly. It also overlaps the low
end of the ladder $L$ and shrinks the hard tail on which the probe stays provably below the target.
The default therefore keeps $w_1{=}24$.

A dynamic measure can succeed where geometry fails. The beam that a graph search needs is a property
of the graph, not of the embedding alone. A geometric proxy such as LID scores only the query's
neighborhood in the embedding space. The required beam instead depends on how that neighborhood
is connected, where the search enters, and how fast greedy expansion reaches the true neighbors.
Two queries of equal local dimension can therefore sit in very differently navigable parts of
the same index. Rather than modeling any of this directly, flux reads how much the search's own
answer is still moving, which captures the graph's geometry and the search dynamics at once,
without naming either.
This is why a quantity from two shallow searches can carry information that an \emph{exact}
geometric measure does not.

\subsection{Self-Profiling Difficulty Estimator}
\label{sec:method:estimator}

The estimator turns flux from a measurement into a per-query beam prediction before the search
commits (Algorithm~\ref{alg:estimate}). The probe features make up a vector
\begin{equation}
  \phi(q)=\bigl(\chi(q),\ (r_0-r_1)/r_0,\ r_0,\ \bar r_0\bigr),
  \label{eq:feat}
\end{equation}
with $\bar r_0$ the probe's mean returned distance. A light predictor $h$ of log-cost,
$\hat c(q)=\exp h(\phi(q))$, is fit on a held-out split of queries. The predictor is calibrated offline
on labeled training queries. Only query-time feature extraction is ground-truth-free. 
At query time, the estimator probes
at $w_0$ and $w_1$ and always forms the prediction $\hat c(q)$ from the full feature vector, not
from churn alone. If $\hat c(q)$ is already at or below $w_1$, the probe result is returned
directly; otherwise the query is searched at the smallest ladder width at least $\hat c(q)$.
This shortcut is gated on the prediction rather than on churn by itself, because a
zero-churn probe can mean the query has converged, or it can mean the probe has not yet begun to
discriminate (\S\ref{sec:eval:ablation}), and only the full prediction, built from churn and the
probe's distance features together, can tell the two apart. The estimator touches the index
only through the probe. That probe is just two top-$k$ queries at fixed widths. These two facts
together make it portable and cheap.

\begin{algorithm}[t]
\caption{Self-profiling adaptive search}\label{alg:estimate}
\KwIn{query $q$; index; widths $w_0<w_1$; ladder $L$; predictor $h$}
\KwOut{approximate $k$ nearest neighbors of $q$}
\ForEach{$w\in\{w_0,w_1\}$}{
  $R_w\leftarrow\textsc{Search}(q,w)$\;
  $r_w\leftarrow\textsc{KthDist}(R_w)$\;
}
$\chi\leftarrow|R_{w_0}\triangle R_{w_1}|/(2k)$\;
$\phi\leftarrow\bigl(\chi,\ (r_{w_0}-r_{w_1})/r_{w_0},\ r_{w_0},\ \textsc{MeanDist}(R_{w_0})\bigr)$\;
$\hat c\leftarrow\exp\bigl(h(\phi)\bigr)$\;
\If{$\hat c\le w_1$}{
  \Return $R_{w_1}$\;
}
$b\leftarrow\max L$\;
\ForEach{$b'\in L$ in increasing order}{
  \If{$b'\ge\hat c$}{
    $b\leftarrow b'$\;
    \textbf{break}\;
  }
}
\Return \textsc{Search}(q,b)\;
\end{algorithm}

Algorithm~\ref{alg:estimate} adds only a fixed, query-independent cost on top of the search that a
query runs anyway. Let $W(w)$ denote the per-query search work at beam width $w$; this analysis
assumes a linear cost model, $W(w)=\Theta(w)$, which both CAGRA's \texttt{itopk} search and
HNSW's \texttt{ef} search exhibit by the same bounded-candidate-list argument. Under
this model, the two probes cost $W(w_0)+W(w_1)=\Theta(w_1)$. The
churn, feature vector, and predictor $h$ are $O(k)$ over the returned $k$-sets and $O(1)$ in the
fixed-length feature; forming $\hat c(q)$ costs the same $O(1)$ whether or not it ends up at or
below $w_1$. Choosing the ladder rung is $O(|L|)$. When $\hat c(q)>w_1$, the
committed search dominates at $W(b)=\Theta(b)$ for the predicted beam $b\approx c(q)$. The
per-query time is then $\Theta(w_1+b)$; the probe is an additive $\Theta(w_1)$ overhead. The
per-query time is $\Theta(w_1)$ when $\hat c(q)\le w_1$ and the probe result is returned directly. Space is $O(k)$
for the two probe result sets.

This fixed probe cost has two consequences that make the estimator practical beyond a single
system. First, the probe reads only the width-parameterized top-$k$ that every graph index
already exposes. The same estimator design therefore applies to any of them, refit but otherwise
unchanged. Second, since its cost
stays $\Theta(w_1)$ while a hard query pays $\Theta(c(q))$ for the search that it needs anyway, the
probe is asymptotically free exactly where queries are expensive. The following proposition
formalizes these two properties.

\begin{proposition}[Fixed-cost, index-agnostic estimation]
  \label{prop:overhead}
  Let $W(w)$ be the per-query search work at width $w$, nondecreasing in $w$.
  (i) The estimator is well defined for any ANN index exposing a width-parameterized top-$k$,
  regardless of $W$.
  (ii) If in addition $W(w)=\Theta(w)$, a linear-cost condition that both CAGRA's itopk search
  and HNSW's ef search satisfy, then the probe cost is $W(w_1)=\Theta(w_1)$ per query, independent of the query's
  cost label $c(q)$. The relative overhead $W(w_1)/W(c(q))=\Theta\!\bigl(w_1/c(q)\bigr)$ therefore tends to $0$
  on the hard tail $c(q)\gg w_1$.
\end{proposition}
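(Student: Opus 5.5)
The proof splits along the two clauses, and both follow by inspecting Algorithm~\ref{alg:estimate} line by line.

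For (i), I will check that every step of the algorithm uses the index only through the call \textsc{Search}$(q,w)$, which returns a top-$k$ set with its distances. The remaining steps take only those returned sets as input. The churn $\chi$ is the set computation of Eq.~\eqref{eq:churn}. \textsc{KthDist} and \textsc{MeanDist} read the $k$ returned distances. The predictor $h$ is a fixed function of the four-dimensional vector $\phi(q)$ of Eq.~\eqref{eq:feat}. The rung selection scans the finite ladder $L$. Nothing here refers to graph internals or to $W$, so the output is determined for any index exposing a width-parameterized top-$k$. Two edge cases need care. First, $\hat c=\exp h(\phi)$ is a positive real, so the comparison $\hat c\le w_1$ and the ladder scan always make sense. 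Second, the default $b\leftarrow\max L$ guarantees a returned search even when $\hat c>\max L$. One side condition must be stated explicitly: $r_{w_0}>0$, so that $(r_0-r_1)/r_0$ is defined. I would assume it, or adopt the convention that the feature is $0$ when $r_0=0$, since an exact match is a degenerate case.

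For (ii), the probe work is $W(w_0)+W(w_1)$. Because $W$ is nondecreasing and $w_0<w_1$, this sum lies between $W(w_1)$ and $2W(w_1)$, so it equals $\Theta(W(w_1))=\Theta(w_1)$ under the linear-cost assumption. The constants hidden in $\Theta$ come from $W$ and from the fixed widths, not from $q$. I will also note that $w_0$ and $w_1$ are chosen once for the whole workload, so the probe cost does not depend on $c(q)$. Feature extraction and prediction add $O(k)+O(1)$, and rung selection adds $O(|L|)$. These are query-independent as well and are absorbed as long as $k,|L|=O(w_1)$, or they can be stated separately as additive constants. For the ratio, I substitute $W(w_1)=\Theta(w_1)$ and $W(c(q))=\Theta(c(q))$ to get $W(w_1)/W(c(q))=\Theta(w_1/c(q))$, and this tends to $0$ as $c(q)/w_1\to\infty$.

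The main obstacle is making ``tends to $0$ on the hard tail'' precise, because $c(q)$ ranges over the finite ladder $L$. I would state the limit over a family of ladders, or of queries, whose costs grow without bound while $w_1$ stays fixed, and I would require the $\Theta$ constants to be uniform in $w$. Equivalently, I would give the explicit bound $W(w_1)/W(c(q))\le (C/c_0)\,w_1/c(q)$, where $c_0 w\le W(w)\le C w$ for all $w$. That bound is nonasymptotic and implies the limit statement. I would also note that the claim is relative to the query's required work $W(c(q))$, not to the committed $W(b)$. When $b\ge c(q)$ the committed work is only larger, so the overhead is smaller still.
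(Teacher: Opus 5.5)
Your proposal is correct and follows the paper's proof. Part (i) observes that $\phi$, $h$, and the rung selection touch the index only through the width-parameterized top-$k$ output; part (ii) bounds the probe work $W(w_0)+W(w_1)$ by $\Theta(w_1)$ and divides by $W(c(q))=\Theta(c(q))$. Your refinements tighten points the paper glosses over: the side condition $r_{w_0}>0$, the sandwich $W(w_1)\le W(w_0)+W(w_1)\le 2W(w_1)$, uniform $\Theta$-constants yielding the nonasymptotic bound $(C/c_0)\,w_1/c(q)$, and the caveat that $c(q)$ lives on a finite ladder. The paper additionally sketches, via a bounded-candidate-list model, why CAGRA and HNSW satisfy the linear-cost hypothesis that you take as given.
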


\begin{proof}[Proof sketch]
  (i) $\phi$ is a function only of $R_{w_0}$, $R_{w_1}$, and their returned distances, all
  emitted by any width-parameterized search; no index internals appear and no cost model is
  assumed, which is why the
  same feature map and fitting procedure transfer from the GPU CAGRA graph to the CPU HNSW graph
  unchanged, each refit on that index's own queries
  (\S\ref{sec:eval:generalize}). 
  (ii) The probe issues two searches at fixed $w_0<w_1$;
  under $W(w)=\Theta(w)$, its cost is $\Theta(w_1)$ regardless of $q$, while a query that
  needs width $c(q)$ pays $\Theta(c(q))$.
  Appendix~\ref{app:overhead} gives the counting and notes an unimplemented warm-start
  refinement in which a resumable search could reuse part of the probe work.
\end{proof}

The predictor is fit against $\log c(q)$ rather than $c(q)$. Per-query cost may span more than an order of
magnitude across a workload (recall Figure~\ref{fig:intro}(a)). A linear fit on raw cost would therefore be
dominated by the hard tail. The result depends on the features $\phi$ far more than on the
regressor that reads them. In our implementation, $h$ is an ordinary least-squares fit on
$\phi$ with no regularization or hyperparameter to tune;
\S\ref{sec:eval:flux} will show that the
flux-over-LID ordering is preserved when it is swapped for a $k$-NN
regressor.

On a batched GPU index, the predicted widths are consumed by grouping. Queries whose $\hat c(q)$
falls in the same ladder band are dispatched together at that band's \texttt{itopk}. The easy
majority then runs at a small width, and only the predicted-hard tail pays for a large beam. The
estimator therefore needs no per-query kernel launch; it only reorganizes an existing batched
search. A prediction need only be accurate enough to place a query in the right band, and by
Proposition~\ref{prop:overhead} the probe that produces it becomes negligible relative to the
committed search on the sufficiently hard bands, though not on bands near the probe width itself.

A misprediction does not invalidate the search procedure, but it can cost either recall or
efficiency. The search still runs to the ladder width that the estimator selects, and a query's
recall is whatever that width delivers, never silently corrupted below it. A too-narrow beam
places a query below the rung that it needed. By the minimality in the cost definition~\eqref{eq:cost},
any ladder rung below $c(q)$ misses the target recall: this is the recall cost
of an error. A too-wide beam instead selects a
rung above the minimum sufficient width and spends more work than needed. This extra work
cannot exceed what a bounded number of adjacent ladder rungs cost, which is the efficiency
cost of an error. A proposition at the close of this section will bound this extra work
(Proposition~\ref{prop:misprediction}), once the fixed-probe protocol below has introduced
the notation it shares. 
 
 
\subsection{Fixed-Probe Evaluation Protocol}
\label{sec:method:eval}

A self-observed measure is computed from the search's own behavior, which raises a risk that a
static proxy does not share: two such measures could each look accurate simply because they
agree with each other, while both miss a query's real cost. The protocol
(Algorithm~\ref{alg:protocol}) guards against this by scoring every measure against the
query's true cost label on a held-out split of queries that the measure was not fit on.

Two further rules keep the comparison fair between measures. Every baseline is given its
strongest available form, for example, exact LID computed from the true $k$-NN rather than an
online approximation. Any margin that flux shows is therefore not an artifact of a weakened
baseline. Every online measure, flux included, is read from the same fixed, narrow probe at the
low end of the beam ladder. No measure therefore sees a wider or more informative window into
the search than another. This fixed probe keeps the comparison uniform across all answerable
queries and stays below the target recall on the hard tail.

\begin{algorithm}[t]
\caption{Fixed-probe evaluation of a measure $m$}\label{alg:protocol}
\KwIn{queries $Q$; index; ladder $L$; target $\tau$; fixed probe widths $w_0<w_1$; measure $m$}
\KwOut{held-out correlation between predicted and true $\log$-cost}
\ForEach{$q\in Q$}{
  \ForEach{$b\in L$ in increasing order}{
    $R_b\leftarrow\textsc{Search}(q,b)$\;
    $\rho_b\leftarrow|R_b\cap N_k(q)|/k$\;
  }
  $A_q\leftarrow\{b\in L:\rho_b\ge\tau\}$\;
  $x_q\leftarrow m(q)$ from the fixed probe $(w_0,w_1)$\;
}
$Q_{\mathrm{obs}}\leftarrow\{q\in Q:A_q\ne\varnothing\}$\;
$c(q)\leftarrow\min A_q$ for each $q\in Q_{\mathrm{obs}}$\;
partition $Q_{\mathrm{obs}}$ into $Q_{\mathrm{fit}}\sqcup Q_{\mathrm{test}}$\;
fit predictor $h_m$ of $\log c(q)$ on $\{(x_q,\log c(q)):q\in Q_{\mathrm{fit}}\}$\;
\Return correlation of $h_m(x_q)$ with $\log c(q)$ over $q\in Q_{\mathrm{test}}$\;
\end{algorithm}

Algorithm~\ref{alg:protocol} is a procedure run once over the query set and dominated
by the cost of recovering the ground-truth label. For each of the $|Q|$ queries, it searches the
full width ladder. That costs $\sum_{b\in L}W(b)=\Theta(\sum_{b\in L}b)$ under the linear model,
plus $O(|L|\,k)$ to score recall against the true neighbors $N_k(q)$. Recovering $c(q)$ and
reading the measure $m$ add only $O(|L|)$ and $O(1)$. The split, the regression on a
fixed-length feature, and the held-out correlation are each linear in $|Q|$. The total is thus
$O\!\bigl(|Q|\,(\sum_{b\in L}W(b)+|L|\,k)\bigr)$ time and $O(|Q|)$ space. This one-time offline
cost recovers the per-query labels $c(q)$; because the probe stays fixed and narrow,
Proposition~\ref{prop:decirc} will demonstrate that it is pre-target on the hard tail. The online
estimator of Algorithm~\ref{alg:estimate} pays none of it.

The cost label is recovered from ordinary batched searches.
For example, CAGRA applies one \texttt{itopk} to
a batch but returns per-query neighbors. The protocol therefore reads $c(q)$ in \eqref{eq:cost} from a
handful of batched searches at increasing ladder widths, taking the smallest width whose
per-query result attains $\rho\ge\tau$. This needs no kernel instrumentation and no per-query
reruns. Each measure is then scored by held-out prediction of $\log c(q)$, fit on a random half
of the queries and reported on the other half. Agreement among proxies is not evidence: two
proxies can correlate strongly with each other yet jointly miss cost, as any feature $u$ and its
affine copy $\alpha u+\beta$ do. Only out-of-sample prediction of the cost itself counts.

Each baseline is given its best case. Any weakness is therefore the measure's own rather than an
estimator's. Local intrinsic dimensionality is tested in its \emph{exact} form, computed from
true $k$-NN distances via the Hill/MLE estimator~\cite{houle_lid,amsaleg_lid} rather than any
online approximation. Hubness, the structural alternative, is the in-degree of a query's true
neighbors in the graph~\cite{radovanovic_hubness}. Steiner-hardness needs each query's
ground-truth neighbors and an NP-hard solve. Steiner-hardness enters only as an offline, ground-truth-dependent
graph-native reference~\cite{steiner_hardness}.

The fixed-probe control sets $w_0<w_1$ at the low end of the ladder for every
query. The evaluation scores each measure over all answerable queries. The fixed narrow probe provides a guarantee
on the queries that matter: whenever a query's cost label exceeds $w_1$, the probe has not reached
the target. The measure therefore predicts beyond the observed probe widths. Since
$w_1$ sits far below the beam that the hard tail needs, this covers exactly the hard, expensive
queries where prediction is both difficult and valuable. The following proposition makes the
guarantee precise.

\begin{proposition}[Probe is pre-target on hard tail]
  \label{prop:decirc}
  Suppose $q$ is answerable and $w_1\in L$ with $w_1<c(q)$. Then the probe has not reached the target recall:
  $\rho_{w_1}(q)<\tau$.
\end{proposition}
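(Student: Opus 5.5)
The claim follows directly from the minimality in the cost definition~\eqref{eq:cost}, so I will argue by contradiction. No property of the graph, the data, or monotonicity of recall in the width is needed. Only three facts are used: $w_1$ is a rung of the ladder $L$, $c(q)$ is the minimum of $A(q)$, and $w_1<c(q)$.

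First I recall the setup. Since $q$ is answerable, $A(q)=\{b\in L:\rho_b(q)\ge\tau\}$ is a nonempty finite subset of $L$, so $c(q)=\min A(q)$ is well defined. In particular, $c(q)\le b$ for every $b\in A(q)$.

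Next, suppose for contradiction that $\rho_{w_1}(q)\ge\tau$. Because $w_1\in L$ by hypothesis, this gives $w_1\in A(q)$ by the definition of $A(q)$. Minimality then yields $c(q)\le w_1$, which contradicts $w_1<c(q)$. Hence $\rho_{w_1}(q)<\tau$.

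As a remark after the proof, I will note that the same argument shows $\rho_b(q)<\tau$ for every rung $b\in L$ with $b<c(q)$. This covers $w_0$ whenever $w_0\in L$. It is also the fact the paper invokes for the recall cost of a too-narrow beam.

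The main obstacle is the hypothesis $w_1\in L$. Without it, $A(q)$ says nothing about $\rho_{w_1}(q)$, and the conclusion would require monotonicity of $\rho_w$ in $w$, which graph search does not guarantee. I would therefore point out explicitly that the proof uses membership in the ladder and nothing about the search dynamics.
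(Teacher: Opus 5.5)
Your proof is correct and matches the paper's: it also uses only $w_1\in L$ and $w_1<c(q)=\min A(q)$ to conclude $w_1\notin A(q)$, hence $\rho_{w_1}(q)<\tau$. You argue by contradiction where the paper argues directly, and your note that no monotonicity of $\rho_w$ in $w$ is needed matches the paper's remark after its proof.
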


\begin{proof}[Proof sketch]
  This follows from \eqref{eq:cost}. Since $q$ is answerable, $w_1<c(q)$ places $w_1$ below
  the minimum width that reaches $\tau$. Therefore $\rho_{w_1}(q)<\tau$.
  The complete proof can be found in Appendix~\ref{app:decirc}.
\end{proof}

Algorithm~\ref{alg:protocol} does not name any specific measure; it only takes a quantity
$m(q)$ and scores it. Exact LID, hubness, Steiner-hardness, and flux are each such a quantity,
computed on their own and then handed to the same protocol.
The pre-target guarantee of Proposition~\ref{prop:decirc} is therefore not unique to flux
either. Its proof depends only on the condition $w_1<c(q)$, not on what $m(q)$ computes. It
holds for every measure scored under this protocol.

We close this section with the bound promised in \S\ref{sec:method:estimator}: what it
costs when the estimator serves a wider beam than a query needs. The bound uses
the same ladder $L$ and cost $c(q)$ defined above. 

\begin{proposition}[Extra work from too-wide beam is bounded by ladder granularity]
  \label{prop:misprediction}
  Let $L=\{b_1<\cdots<b_m\}$ and fix a query $q$ with $c(q)=b_k$. If the estimator serves rung
  $b_{k+j}$ for some $j\ge 0$, the excess work relative to serving at $c(q)$ satisfies
  \[
    W(b_{k+j})-W(c(q))\ \le\ j\cdot\max_{1\le i<m}\bigl(W(b_{i+1})-W(b_i)\bigr).
  \]
\end{proposition}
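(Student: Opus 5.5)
The plan is to write the excess work as a telescoping sum over adjacent ladder rungs and bound each increment by the largest one. Since $c(q)=b_k$, the left-hand side is $W(b_{k+j})-W(b_k)$. We may assume $k+j\le m$, because the estimator only serves rungs of $L$. We then write
\[
  W(b_{k+j})-W(b_k)\;=\;\sum_{i=k}^{k+j-1}\bigl(W(b_{i+1})-W(b_i)\bigr).
\]
The case $j=0$ needs separate mention. There the sum is empty, so both sides are $0$ and the inequality holds trivially. This also avoids any question about the maximum over an empty index range when $m=1$: if $m=1$, only $j=0$ is possible.

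For $j\ge1$, every index $i$ in the sum satisfies $k\le i\le k+j-1\le m-1$, so each term is one of the quantities $W(b_{i+1})-W(b_i)$ with $1\le i<m$. Each term is therefore at most $\Delta:=\max_{1\le i<m}\bigl(W(b_{i+1})-W(b_i)\bigr)$. The sum has exactly $j$ terms, so it is bounded by $j\Delta$, which is the claim.

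No linear-cost assumption is needed. We also do not need monotonicity of $W$ for the upper bound; the nondecreasing hypothesis of Proposition~\ref{prop:overhead} only ensures that the excess is nonnegative, so that it is genuinely ``extra'' work.

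I expect no real obstacle. The only care points are the index bookkeeping: making sure the telescoped indices stay inside $1\le i<m$, and handling the degenerate cases $j=0$ and $m=1$. As a closing remark, I would note that under $W(w)=\Theta(w)$ the bound becomes $O(j\cdot\max_i(b_{i+1}-b_i))$, so a geometric ladder makes the excess proportional to the served width.
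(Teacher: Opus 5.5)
Your proof is correct and is the paper's own argument: you telescope $W(b_{k+j})-W(b_k)$ into $j$ adjacent-rung gaps and bound each by the largest gap, and your remarks that monotonicity of $W$ only makes the excess nonnegative and that $j=0$ and $m=1$ are degenerate cases are accurate refinements. The one flaw is in your closing aside, which lies outside the proof: $W(w)=\Theta(w)$ alone does not give $W(b_{i+1})-W(b_i)=O(b_{i+1}-b_i)$, because the two implied constants may differ. Also, since the stated bound maximizes over all rungs, a geometric ladder makes it scale with $b_m-b_{m-1}$ rather than with the served width, unless you keep the local maximum over $k\le i<k+j$ that your telescoping actually yields.
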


\begin{proof}[Proof sketch]
  Telescope $W(b_{k+j})-W(b_k)$ into $j$ consecutive rung-to-rung differences and bound each by
  the largest such gap on $L$.
  The complete proof can be found in Appendix~\ref{app:misprediction}.
\end{proof}

\section{Evaluation}
\label{sec:eval}


\subsection{Experimental Setup}
\label{sec:eval:setup}

\paragraph{Datasets}
We evaluate SHEAF and other measures on four benchmarks (Table~\ref{tab:datasets}). They span domains from
hand-crafted and deep-learning image features to encoded web text, intrinsic dimensionality
(mean LID) from $23$ to $40$, and ambient dimension from $100$ to $256$. We compute the exact
50-nearest-neighbor set for each query once on the GPU\@. Every experiment in this paper uses the
first 10 of these as ground truth ($k{=}10$), except one sensitivity setting in
\S\ref{sec:eval:sensitivity} that uses all 50 to test $k{=}50$.

\begin{table}[t]
  \centering
  \small
  \caption{Four evaluation datasets, ordered by mean LID.}
  \label{tab:datasets}
  \begin{tabular}{@{}llrrr@{}}
    \toprule
    Dataset & Domain & No. of Queries & Dimension & Mean LID \\
    \midrule
    SIFT~\cite{sift}                 & image & $10{,}000$ & $128$ & $23.4$ \\
    DEEP~\cite{deep}                 & image          & $1{,}000$  & $256$ & $29.9$ \\
    SPACEV~\cite{spacev}             & web text             & $29{,}316$ & $100$ & $30.0$ \\
    IMAGENET~\cite{imagenet}         & image                & $10{,}000$ & $100$ & $39.6$ \\
    \bottomrule
  \end{tabular}
\end{table}

\paragraph{Indexes and cost}
Our primary index is CAGRA with \texttt{nn\_descent}:
\texttt{graph\_degree} $64$, 
\texttt{intermediate\_graph\_degree} $128$, squared-L2. 
To test generality off the GPU,
we also build HNSW (hnswlib, $M{=}16$, \texttt{ef\_construction} $200$) on all four datasets. A query's cost
label is the smallest value on the ladder
$\{16,\allowbreak24,\allowbreak32,\allowbreak64,\allowbreak96,\allowbreak128,\allowbreak192,\allowbreak256,\allowbreak384,\allowbreak512\}$
(CAGRA's \texttt{itopk} or HNSW's \texttt{ef}) at which
its search reaches the recall target, evaluated at recall@$10$ $\ge \tau$ for
$\tau \in \{0.90, 0.95\}$. Queries that do not reach the target within the ladder are
right-censored (i.e., their cost are beyond the ladder scale) and excluded from the correlation. Under these settings, at $\tau{=}0.90$
the right-censored counts for SIFT, DEEP, SPACEV, and IMAGENET are
$0$, $0$, $127$, and $422$. At $\tau{=}0.95$, they are $40$, $1$, $670$, and $1{,}330$.

\paragraph{Baselines}
We compare six per-query hardness measures, laid out in Table~\ref{tab:signals}.
Three require ground truth: exact LID~\cite{amsaleg_lid} (the standard geometric proxy,
computed from true $k$-NN distances so its ceiling is measured, not an estimator's),
reverse-$k$-NN hubness~\cite{radovanovic_hubness} (R-$k$NN, the structural proxy), and
Steiner-hardness~\cite{steiner_hardness} (a recent graph-native measure, an offline
ground-truth-dependent reference). Three are computable during the search itself: the distance-based
probe (shallow-probe distances and their improvement, standing in for adaptive beam
search~\cite{adaptive_beam}), an online LID estimate (the Hill/MLE
estimator~\cite{houle_lid,amsaleg_lid} applied to the probe's own returned distances instead of
the true $k$-NN), and SHEAF proposed in this work.

\begin{table}[t]
  \centering
  \small
  \caption{Six hardness measures and availability at query time.}
  \label{tab:signals}
  \begin{tabular}{@{}llcc@{}}
    \toprule
    Measure & Complexity & Ground Truth & Query-time \\
    \midrule
    Exact LID~\cite{amsaleg_lid}                 & true $k$-NN   & needed & \xmark \\
    Hubness (R-$k$NN)~\cite{radovanovic_hubness} & base R-$k$NN  & needed & \xmark \\
    Steiner-hardness~\cite{steiner_hardness}     & NP-hard & needed & \xmark \\
    Distance-based probe~\cite{adaptive_beam}    & one probe     & no need   & \cmark \\
    Online LID estimate~\cite{amsaleg_lid}       & one probe     & no need   & \cmark \\
    SHEAF (this work)                       & one probe     & no need   & \cmark \\
    \bottomrule
  \end{tabular}

\end{table}

\paragraph{Metrics.}
Every measure is scored by how well it predicts a query's cost out of sample. The primary
metric is the held-out Pearson correlation with $\log$ of the cost label, fit on a random half
of the queries and reported on the other half. We summarize it as the flux-over-LID ratio of
these correlations.

\paragraph{Platform.}
All experiments run on a single Chameleon testbed node with two AMD EPYC 7763 $64$-core
CPUs ($256$ threads, $503$\,GB RAM) and one NVIDIA A100 80GB PCIe GPU (driver $560.35$,
CUDA $12.6$), under Ubuntu $24.04$ (kernel $6.8$). The software stack is Python $3.11$, cuVS
$26.06$ with CuPy $14.1$ for CAGRA and the exact GPU $k$-NN, hnswlib $0.8.0$ for HNSW, and
NumPy $2.4$. Figures use Matplotlib $3.11$. CAGRA and ground-truth $k$-NN run on the GPU, and
HNSW runs on the host CPU.

\subsection{Static Hardness Proxies}
\label{sec:eval:static}
Exact LID predicts the per-query beam only weakly on real CAGRA.
Figure~\ref{fig:static}(a) shows, per dataset, the held-out correlation of exact LID with the
per-query minimum \texttt{itopk}: it ranges from $0.42$ (SIFT) to $0.64$ (IMAGENET), never
strong, even though the beam itself varies several-fold across datasets
(mean minimum \texttt{itopk} from $28$ to $118$).

Static graph structure does no better than geometry. On SIFT1M
(Figure~\ref{fig:static}(b)),
the in-degree of a query's true neighbors, its anti-hubness, and
its BFS hop distance from the medoid all correlate with cost \emph{more weakly} than LID
(Pearson $\le 0.25$ vs.\ $0.33$). Adding every structural feature to a held-out regression on
top of LID lifts it only from $0.42$ to $0.47$. CAGRA balances node degree at construction,
which suppresses the hubness skew on which these features rely. This weakness is not a SIFT1M
artifact: held-out, hubness alone predicts the per-query beam with correlation $0.15$ to $0.30$
across the four datasets, below exact LID on every one of them
(Figure~\ref{fig:baselines}(a)).

\begin{figure}[t]
  \centering
  \includegraphics[width=0.8\columnwidth]{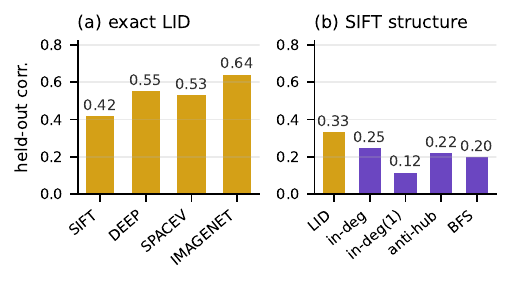}
  \caption{Static proxies predict the beam only weakly.}
  \label{fig:static}
\end{figure}

We also compare against Steiner-hardness on the one dataset where its NP-hard solve is tractable.
Steiner-hardness~\cite{steiner_hardness} scores a query by the minimum cost to answer it on a
separate reference graph built by its authors' own method, not CAGRA, using the query's
ground-truth neighbors and an NP-hard Directed-Steiner-Tree solve (authors' implementation,
SIFT1M). At recall $0.90$, Steiner-hardness correlates with the per-query beam more strongly
than exact LID ($0.47$ vs.\ $0.34$ held out), capturing real structure that geometry alone
misses. Answer-set flux (SHEAF) scores higher still ($0.53$), computed online without ground truth from the same
fixed narrow probe used throughout this paper (Table~\ref{tab:steiner}). 

\begin{table}[t]
  \centering
  \small
  \caption{On SIFT1M at recall $0.90$, flux scores higher than both the offline Steiner-hardness
  reference and exact LID.}
  \label{tab:steiner}
  \input{tab/tab_steiner}
\end{table}

\subsection{Flux as a Cost Predictor}
\label{sec:eval:flux}
 
Answer-set flux (SHEAF) predicts the per-query beam better than exact LID on every one of the four
datasets and at both recall targets (Figure~\ref{fig:flux-eval}). The
flux-over-LID ratio in held-out correlation ranges from $1.04\times$ to $1.55\times$. Flux is
never worse than LID\@. This gain holds even though flux's own probe stays strictly
narrower than the widths that hard queries ultimately need. These margins are stable under
resampling: across $50$ random half/half splits per dataset, flux's held-out correlation exceeds
exact LID's on every dataset and both targets by $4.2$ to $10.5$ standard errors (paired gap
$0.04$ to $0.16$). The advantage is therefore not an artifact of a single split.

Exact LID, though, is not the comparison that matters for a running system. It is given the
true $k$-NN distances of every query. No deployed system has this quantity. We include exact
LID because it is the field's standard difficulty proxy. Outperforming even this
generously-equipped baseline is a sanity check rather than the main objective. The measures that a running system could compute
online are the distance-based probe and an online LID estimate. These are the baselines that
matter most for a real comparison. 
The remainder of this subsection will compare SHEAF against these two online measures, and then show that the choice of regressor does not drive the result.

The online LID estimate applies the Hill/MLE estimator to the probe's own returned distances
rather than to true neighbors. 
By contrast, flux outperforms it on all four
datasets by $1.25$ to $2.0\times$ in held-out correlation (Figure~\ref{fig:baselines}(a)). This
margin is wider than flux's lead over exact LID\@. The probe's candidate set is a biased,
still-converging sample, and the Hill estimator inherits that bias. This leaves the online
estimate weaker than even exact LID (on SIFT1M, $0.27$ vs.\ $0.34$ at recall $0.90$ and $0.30$
vs.\ $0.42$ at recall $0.95$). The distance-based probe is the churn-free online quantity that an
adaptive beam search reads. Flux outperforms it on all four datasets by $1.05$ to $1.54\times$.
This margin isolates exactly what the top-$k$ membership
churn adds over the probe's distances alone: it is largest on the low-dimensional datasets
($1.54\times$ on SIFT) and thinnest on IMAGENET ($1.05\times$), whose high intrinsic dimension
leaves the probe's distances already carrying most of the information (more to be discussed in~\S\ref{sec:eval:ablation}).
Churn is thus a consistent but dataset-dependent gain over a distance-only probe.

The choice of regressor does not drive this result. We refit $h$ with a $k$-nearest-neighbor
regressor in place of the ordinary least-squares fit used everywhere else in the paper, which is a
nonparametric alternative from a different model family entirely. Flux still outperforms exact LID on
all four datasets and both recall targets under this alternative regressor
(Figure~\ref{fig:baselines}(b)).

\begin{figure}[t]
  \centering
  \includegraphics[width=\columnwidth]{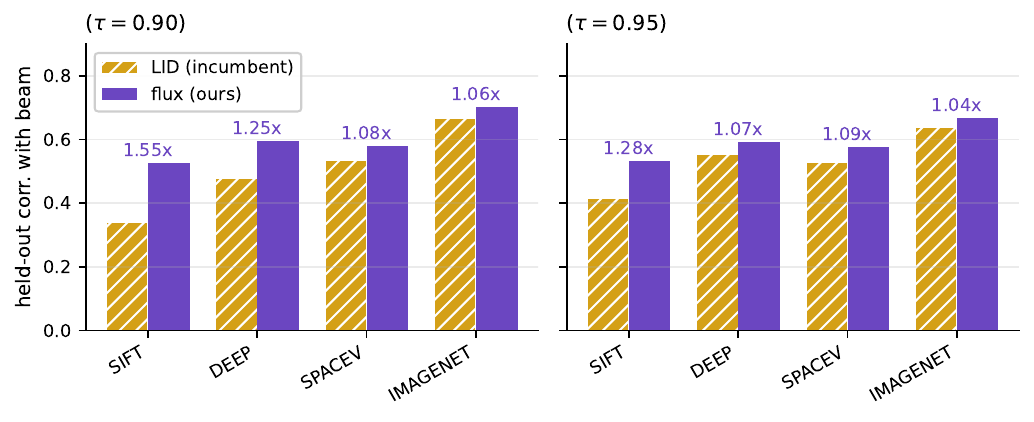}
  \caption{Flux (purple) outperforms LID (gold) on all four datasets and both recall targets; labels
  show the ratio.}
  \label{fig:flux-eval}
\end{figure}

\begin{figure}[t]
  \centering
  \includegraphics[width=\columnwidth]{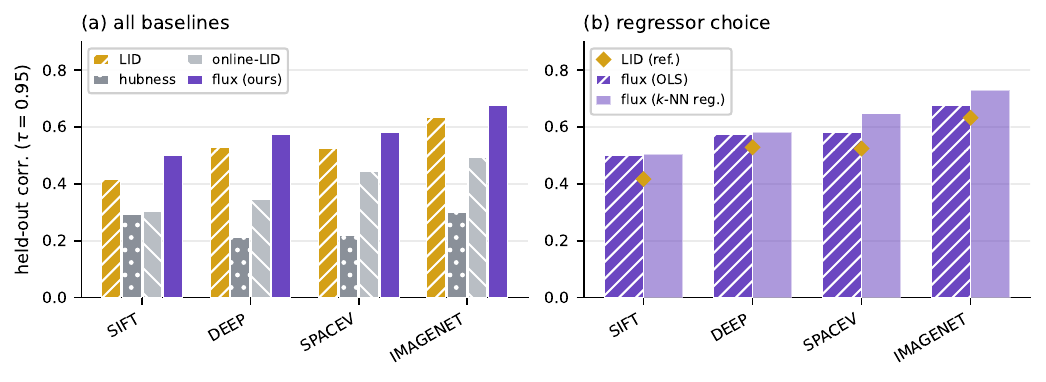}
  \caption{(a) Flux (SHEAF) outperforms LID, hubness, and online LID at recall $0.95$; (b) The same under a $k$-NN
  regressor.}
  \label{fig:baselines}
\end{figure}

\subsection{Generalization Across Index Architectures}
\label{sec:eval:generalize}
Flux (SHEAF) outperforming LID is a property of graph-based search, not of CAGRA or of GPUs. To separate the
two, we repeat the head-to-head on HNSW, a CPU graph index whose construction, unlike CAGRA's,
does not balance node degree. We use HNSW's \texttt{ef} in place of \texttt{itopk} and otherwise
the identical protocol: same $k$, exact LID, the same fixed probe ($w_0{=}16$, $w_1{=}24$),
beam ladder, and so forth.

Flux outperforms LID on HNSW on all four datasets at recall target $0.95$
(Figure~\ref{fig:hnsw}), by $1.06$ to $1.13\times$ in held-out correlation.
To verify that this phenomenon is extensible beyond a single target recall, we also compare the two measures at both recall $0.90$ and recall $0.95$.
Figure~\ref{fig:crossarch} shows the held-out correlation of flux and LID on both CAGRA and HNSW, at both recall targets. Flux outperforms LID on both architectures and both targets.
This is consistent with LID retaining more relevance on a graph whose degree is imbalanced.

\begin{figure}[t]
  \centering
  \includegraphics[width=\columnwidth]{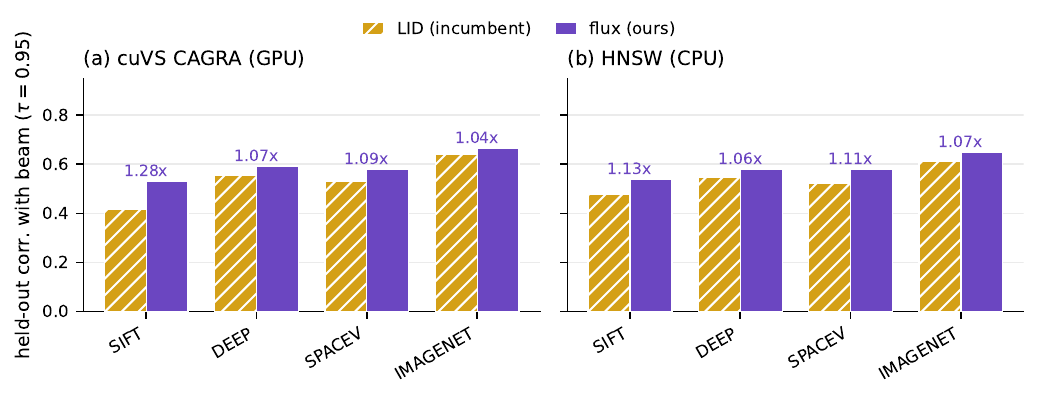}
  \caption{At recall $0.95$, all four datasets, on (a) CAGRA and (b) HNSW; labels show the
  flux-over-LID ratio.}
  \label{fig:hnsw}
\end{figure}

\begin{figure}[t]
  \centering
  \includegraphics[width=0.8\columnwidth]{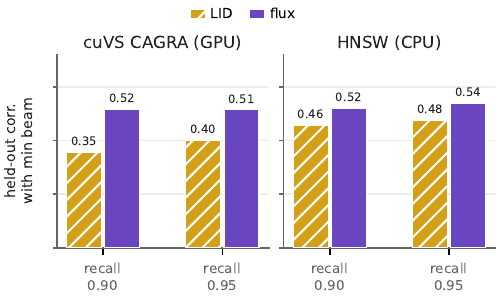}
  \caption{Held-out correlation on CAGRA and HNSW.}
  \label{fig:crossarch}
\end{figure}

\subsection{Parameter Sensitivity}
\label{sec:eval:sensitivity}

On all datasets, we vary seven settings, each changing one knob off the default probe: s1
is the default itself ($w{=}(16,24)$, $\tau{=}0.95$, $k{=}10$, full ladder); s2 widens the probe
to $w{=}(16,32)$; s3 to $w{=}(24,32)$; s4 to $w{=}(16,48)$; s5 lowers the recall target to
$\tau{=}0.90$; s6 raises $k$ to $50$ with $w{=}(64,96)$; and s7 coarsens the \texttt{itopk}
ladder. Across all seven settings (Figure~\ref{fig:sensitivity}), flux (SHEAF) predicts the per-query
beam better than exact LID\@. The ratio ranges from $1.05\times$ to $2.03\times$ and never drops
to parity. The flux-over-LID advantage is therefore stable across every probe and evaluation
knob varied here, not an artifact of a specific setting on any one dataset.

The default setting is not selected for the largest advantage. Widening the probe's second step
to $w{=}(16,48)$ or grading recall at $k{=}50$ increases the margin on all four datasets. The
default $w{=}(16,24)$ keeps the probe at the low end of the cost ladder.

\begin{figure}[t]
  \centering
  \includegraphics[width=\columnwidth]{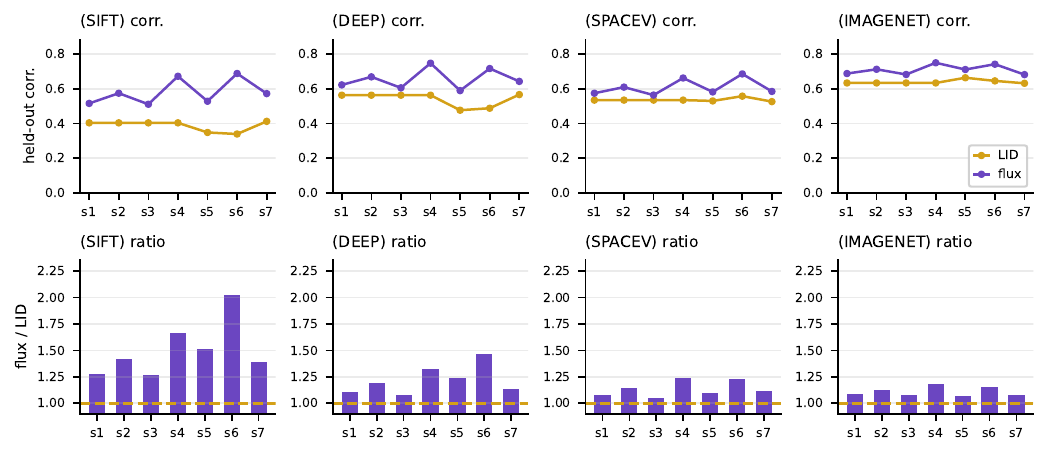}
  \caption{Flux/SHEAF (purple) outperforms LID (gold) across seven settings (top); the ratio stays above parity (bottom).}
  \label{fig:sensitivity}
\end{figure}

\subsection{Ablation Study}
\label{sec:eval:ablation}

Answer-set churn is the component that carries most of flux's prediction. On its own, churn
already predicts the per-query beam about as well as exact LID, and better on the
lower-dimensional datasets (Figure~\ref{fig:ablation}). It dominates the $k$-th distance
improvement in every case. The probe's raw distance profile adds a complementary increment,
largest on the highest-dimensional dataset (IMAGENET). The full flux predictor is the strongest
on all four datasets.

The churn term requires the shallow probe to overlap the operating beam, so that widening the
probe moves some queries' top-$k$ before the target is reached. All four datasets satisfy this:
their per-query beams sit within reach of the $16\to24$ probe, which makes churn
well-defined and informative across the panel.

\begin{figure}[t]
  \centering
  \includegraphics[width=0.9\columnwidth]{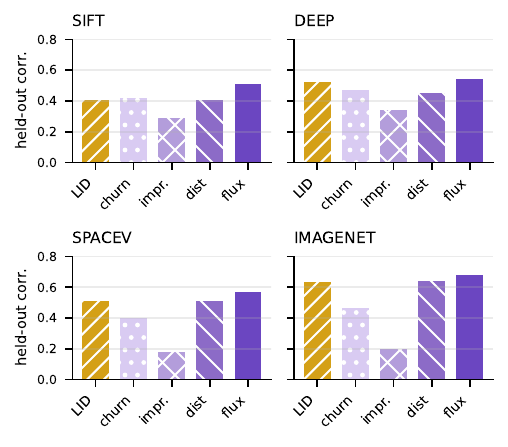}
  \caption{Churn leads the flux components; full flux/SHEAF (purple) is best on all four datasets.}
  \label{fig:ablation}
\end{figure}

\section{Conclusion}
\label{sec:conclusion}

This paper presented SHEAF, a self-profiled predictor of per-query cost in graph-based
approximate nearest neighbor search. Answer-set flux reads the churn of a query's own top-$k$
result across a shallow probe. This churn provably bounds the recall that a query can still gain
over that probe interval. A self-profiling estimator turns flux into a per-query beam prediction
at a fixed, index-agnostic probe cost. A fixed-probe evaluation protocol scores each measure
over all answerable queries and provides a pre-target guarantee on the hard tail.
Our extensive evaluation spans four domain-diverse datasets, five baseline measures including
Steiner-hardness, and two architectures, namely CAGRA on GPU and HNSW on CPU\@. Across all of
them, flux predicts the per-query beam better than exact local intrinsic dimensionality. SHEAF thus turns query hardness
from a static geometric guess into a cheap, online quantity that an adaptive per-query serving
policy can act on directly.

The directions that SHEAF opens concern the hardness measure rather than any specific system. A first direction is to enrich the measure itself: a probe exposes more of a query's early behavior than the top-$k$ churn and distance improvement that flux reads. Any such richer feature could then be scored under the same fixed-probe protocol without new evaluation machinery. A second direction extends validation beyond the four datasets here to further graph architectures, larger workloads, and production scale. Finally, the predictor could drive an adaptive serving policy that gives each query its own beam width: small for the many easy queries, large only for the hard tail. Because every query first runs the probe, such a policy is worthwhile only when the beam that it saves on the easy queries outweighs that probe cost across the workload.


\begin{acks}
  Results presented in this paper were obtained using the Chameleon testbed supported by the National Science Foundation.

\paragraph{Disclosure of AI Usage}
Source code and manuscript editing were assisted by Copilot and Claude. All key components of
this work, including the design and analysis of the flux predictor, the evaluation methodology,
and the interpretation of results, were developed and executed by the authors. The authors have
verified the accuracy and integrity of the content to the best of their knowledge and take full
responsibility for the final manuscript.
\end{acks}

\bibliographystyle{ACM-Reference-Format}
\bibliography{refs}

\clearpage

\appendix

\section{Appendix: Proofs and Additional Remarks}
\label{app:proofs}

\subsection{Churn bounds recall change over probe interval}
\label{app:churn}
\textit{Proposition~\ref{prop:churn}. For any query and widths $w_0<w_1$ with
$|R_0|=|R_1|=k$, $\bigl|\rho_{w_1}(q)-\rho_{w_0}(q)\bigr|\le|R_0\triangle R_1|/(2k)=\chi(q)$.}

\begin{proof}
Fix a query $q$ and widths $w_0<w_1$, and write $R_0,R_1$ for the returned top-$k$ sets with
$|R_0|=|R_1|=k$, $N$ for the true $k$-NN set, and $\rho_i=|R_i\cap N|/k$. Let
$a=|R_0\cap R_1|$. Since both sets have size $k$,
\[
  |R_1\setminus R_0|=|R_0\setminus R_1|=k-a,\qquad
  |R_0\triangle R_1|=2(k-a).
\]
Split the intersections with $N$ along the common part,
\[
  |R_1\cap N|-|R_0\cap N|
  =\bigl|(R_1\setminus R_0)\cap N\bigr|-\bigl|(R_0\setminus R_1)\cap N\bigr|,
\]
because the shared elements $R_0\cap R_1$ contribute the same count to each side and cancel.
Each term on the right lies in $[0,\,k-a]$; their difference is therefore at most $k-a$ in
absolute value:
\[
  \bigl||R_1\cap N|-|R_0\cap N|\bigr|\le k-a=\tfrac12|R_0\triangle R_1|.
\]
Dividing by $k$ yields $|\rho_{w_1}-\rho_{w_0}|\le|R_0\triangle R_1|/(2k)=\chi(q)$. Equality
holds exactly when every element that enters the answer is a true neighbor and every element
that leaves is not (or vice versa), i.e., when all churn is aligned with $N$. In particular
$\chi(q)=0\Rightarrow\rho_{w_1}=\rho_{w_0}$: a stable answer set certifies that recall has
plateaued across the probe.
\end{proof}

\subsection{Fixed-cost, index-agnostic estimation}
\label{app:overhead}
\textit{Proposition~\ref{prop:overhead}. With $W(w)$ the per-query search work at width $w$,
nondecreasing in $w$: (i) the
estimator is well defined for any ANN index exposing a width-parameterized top-$k$, regardless
of $W$; and (ii) if in addition $W(w)=\Theta(w)$, a linear-cost condition that both CAGRA's
\texttt{itopk} search and HNSW's \texttt{ef} search satisfy, its probe cost is
$W(w_1)=\Theta(w_1)$ per query, independent
of $c(q)$. The relative overhead $W(w_1)/W(c(q))=\Theta(w_1/c(q))$ therefore tends to $0$ on the
hard tail $c(q)\gg w_1$.}

\begin{proof}
Write $W(w)$ for the per-query work of a width-$w$ search, measured in node-distance
evaluations. Claim (i) holds for any nondecreasing $W$. Claim (ii) additionally assumes a
bounded-candidate-list search model: the search maintains an internal candidate buffer of size
$\Theta(w)$; each step expands one candidate to its fixed out-degree neighbors, an $\Theta(1)$
operation. Under this model $W(w)=\Theta(w)$. CAGRA's \texttt{itopk} search
satisfies this model directly: the candidate buffer has size $\Theta(w)$ and each iteration does
$\Theta(1)$ expansions. HNSW's \texttt{ef} search has the same structure: the dynamic candidate
list is bounded by $\mathrm{ef}=w$, and each visited node expands to a fixed out-degree $M$ (a
construction-time constant, $M{=}16$ in our setup); its work is therefore likewise $\Theta(w)$. This is
a structural argument about the search procedure, not a measured timing curve; we do not report
wall-clock or node-visit counts against $w$ for either index.

For (i), recall that the feature map $\phi(q)$ in Eq.~\eqref{eq:feat} depends only on $R_{w_0}(q),R_{w_1}(q)$
and their $k$-th/mean returned distances. Every one of these is part of the output of a
top-$k$ search at a specified width, which any width-parameterized ANN index exposes; no
graph internals, distance cache, or index-specific state is read. Hence the estimator is well
defined on any such index, and the identical $\phi$ and predictor $h$ apply to CAGRA and to
HNSW.

For (ii), the probe runs two searches, at $w_0$ and $w_1$, costing
$W(w_0)+W(w_1)=\Theta(w_1)$ since $w_0<w_1$, while forming $\phi$ and evaluating $h$ is $O(1)$
arithmetic. This budget is a constant in the query's true difficulty. A query of cost $c(q)$ served at its
required width pays $W(c(q))=\Theta(c(q))$. The relative overhead is therefore
$W(w_1)/W(c(q))=\Theta(w_1/c(q))$, which tends to $0$ as $c(q)\gg w_1$. That is precisely the
hard, expensive tail an adaptive scheme must budget for.
\end{proof}

\begin{remark}
This bound charges the probe in full. An implementation that resumes the wider search
from the probe's own state, rather than discarding it, could reuse part of that work. We do not
implement or measure this. The reported figures are therefore a conservative upper bound on the
probe's overhead.
\end{remark}

\subsection{The probe remains below the target}
\label{app:decirc}
\textit{Proposition~\ref{prop:decirc}. If $q$ is answerable and $w_1\in L$ with $w_1<c(q)$, then
$\rho_{w_1}(q)<\tau$.}

\begin{proof}
Fix a query $q$, a target $\tau$, and the ladder $L=\{b_1<\cdots<b_m\}$. Write
$A(q)=\{b\in L:\rho_b(q)\ge\tau\}$ for the set of ladder widths that attain the target.
Suppose $w_1\in L$ and $w_1<c(q)$.
Since $q$ is answerable, $A(q)$ is nonempty and $c(q)=\min A(q)$. No width smaller than $c(q)$
belongs to $A(q)$, so $w_1\notin A(q)$ and $\rho_{w_1}(q)<\tau$.
\end{proof}

\begin{remark}
The argument does not assume $\rho_b(q)$ is
nondecreasing in $b$. This distinction matters in practice: CAGRA's recall curve is not
guaranteed monotone in \texttt{itopk}, since changing the beam width alters the greedy traversal,
and a wider search need not visit or retain exactly the true neighbors that a narrower one recovered.
Proposition~\ref{prop:decirc} holds regardless, because it only
rules out $w_1$ lying in the attaining set $A(q)$ and says nothing about the shape of $\rho_b(q)$
between $0$ and $w_1$.
\end{remark}

\begin{remark}
This makes the pre-target guarantee meaningful. On any query with $w_1<c(q)$, that is,
the hard tail that the fixed probe cannot answer, the probe result $R_{w_1}$ has recall strictly
below $\tau$. The probe has not yet recovered enough true neighbors to answer the query at the
target confidence. A measure computed from $R_{w_1}$ therefore cannot be scored on a target that
the probe has already attained, since it has not. It must instead extrapolate from a result that
is still below target. This is the sense in which flux and the other measures are evaluated on
the hard tail: before the probe reaches the recall that it is used to predict, not after.
\end{remark}

\subsection{Extra work from too-wide beam bounded by ladder granularity}
\label{app:misprediction}
\textit{Proposition~\ref{prop:misprediction}. Let $L=\{b_1<\cdots<b_m\}$ and fix a query $q$
with $c(q)=b_k$. If the estimator serves rung $b_{k+j}$ for some $j\ge 0$, the excess work
relative to serving at $c(q)$ satisfies
$W(b_{k+j})-W(c(q))\le j\cdot\max_{1\le i<m}(W(b_{i+1})-W(b_i))$.}

\begin{proof}
Fix a query $q$ with $c(q)=b_k$, and suppose the estimator serves rung $b_{k+j}$ for some
integer $j\ge 0$ with $k+j\le m$. Write $\delta=\max_{1\le i<m}\bigl(W(b_{i+1})-W(b_i)\bigr)$ for
the largest work gap between two adjacent rungs of $L$. Since $W$ is nondecreasing, telescope
the excess work into $j$ consecutive rung-to-rung steps:
\[
  W(b_{k+j})-W(b_k)
  =\sum_{i=k}^{k+j-1}\bigl(W(b_{i+1})-W(b_i)\bigr)
  \le\sum_{i=k}^{k+j-1}\delta
  =j\,\delta.
\]
Each summand is nonnegative because $W$ is nondecreasing, and each is at most $\delta$ by the
definition of $\delta$ as the largest such gap on $L$; the sum of $j$ terms is therefore at most
$j\delta$. Substituting $c(q)=b_k$ gives $W(b_{k+j})-W(c(q))\le j\delta$, which is the claim.
\end{proof}

\begin{remark}
The integer $j$ counts how many rungs above $c(q)$ that the estimator served, that is, how much
wider than needed the served beam was. The bound depends only on this $j$ and on $\delta$, which is a
property of the ladder $L$ alone. It does not depend on the query, the dataset, or the
estimator's internals: any two estimators that serve a query $j$ rungs too wide pay
the same worst-case excess, regardless of how their predictions were produced. A coarser ladder
has fewer, more widely spaced rungs. This shrinks $j$ for a given prediction error measured in
absolute beam units, but it grows $\delta$; the bound makes this trade-off explicit rather than
leaving it implicit in the choice of $L$.
\end{remark}

\begin{remark}
The proposition is silent on a too-narrow beam by construction. An estimate that serves a rung below
$c(q)$ is not a work-cost event; by Proposition~\ref{prop:decirc} it is a recall event, since
that rung already fails to reach the target. The two failure modes are therefore addressed by
two different guarantees. Proposition~\ref{prop:decirc} certifies that any prediction below
$c(q)$ misses the target. This proposition bounds only the excess work of a prediction above
$c(q)$. Because recall need not be monotone in the beam width, this proposition does not bound
the recall deficit of a wider rung or guarantee that the wider rung attains the target; it
bounds the search work $W$, not recall.
\end{remark}

\end{document}

%% file: tab/tab_steiner.tex
\begin{tabularx}{\columnwidth}{@{}lXr@{}}
\toprule
Measure & availability & correlation \\
\midrule
exact LID & offline, true $k$-NN & 0.34 \\
Steiner-hardness & offline, ground-truth-dependent & 0.47 \\
\midrule
SHEAF (this work) & \textbf{online, ground truth not needed} & \textbf{0.53} \\
\bottomrule
\end{tabularx}

%% file: refs.bib
@misc{hua_hardness26,
  author       = {Hua, Zhiyuan and Mo, Qiji and Yao, Zebin and Cui, Lixiao and Liu, Xiaoguang and Wang, Gang and Wei, Zijing and Liu, Xinyu and Tang, Tianxiao and Liu, Shaozhi and Qu, Lin},
  title        = {Dynamically Detect and Fix Hardness for Efficient Approximate Nearest Neighbor Search},
  year         = {2025},
  eprint       = {2510.22316},
  archivePrefix= {arXiv},
  primaryClass = {cs.DB},
  note         = {To appear, SIGMOD 2026},
}

@misc{li_convergent26,
  author       = {Li, Binhong and Yan, Xiao and Lu, Shangqi},
  title        = {Fast-Convergent Proximity Graphs for Approximate Nearest Neighbor Search},
  year         = {2025},
  eprint       = {2510.05975},
  archivePrefix= {arXiv},
  primaryClass = {cs.DB},
  note         = {To appear, SIGMOD 2026},
}

@misc{khanna_navigable25,
  author       = {Khanna, Sanjeev and Padaki, Ashwin and Waingarten, Erik},
  title        = {Sparse Navigable Graphs for Nearest Neighbor Search: Algorithms and Hardness},
  year         = {2025},
  eprint        = {2507.14060},
  archivePrefix = {arXiv},
  primaryClass = {cs.DS},
}

@inproceedings{baranchuk_route19,
  author    = {Baranchuk, Dmitry and Persiyanov, Dmitry and Sinitsin, Anton and Babenko, Artem},
  title     = {Learning to Route in Similarity Graphs},
  booktitle = {Proceedings of the 36th International Conference on Machine Learning (ICML)},
  series    = {PMLR},
  volume    = {97},
  year      = {2019},
}

@inproceedings{prokhorenkova_theory20,
  author    = {Prokhorenkova, Liudmila and Shekhovtsov, Aleksandr},
  title      = {Graph-based Nearest Neighbor Search: From Practice to Theory},
  booktitle = {Proceedings of the 37th International Conference on Machine Learning (ICML)},
  series    = {PMLR},
  volume    = {119},
  year      = {2020},
}

@inproceedings{chen_spann21,
  author    = {Chen, Qi and Zhao, Bing and Wang, Haidong and Li, Mingqin and Liu, Chuanjie and Li, Zengzhong and Yang, Mao and Wang, Jingdong},
  title     = {{SPANN}: Highly-efficient Billion-scale Approximate Nearest Neighbor Search},
  booktitle = {Advances in Neural Information Processing Systems (NeurIPS)},
  year      = {2021},
}

@inproceedings{dzhao_hpdc26,
author = {Zhao, Dongfang},
title = {SIVF: GPU-Resident IVF Index for Streaming Vector Analytics},
year = {2026},
isbn = {9798400726408},
publisher = {Association for Computing Machinery},
address = {New York, NY, USA},
url = {https://doi.org/10.1145/3806645.3807575},
doi = {10.1145/3806645.3807575},
booktitle = {Proceedings of the 35th International Symposium on High-Performance Parallel and Distributed Computing},
pages = {31–44},
numpages = {14},
location = {
},
series = {HPDC '26}
}

@inproceedings{jwang_sigmod21,
author = {Wang, Jianguo and Yi, Xiaomeng and Guo, Rentong and Jin, Hai and Xu, Peng and Li, Shengjun and Wang, Xiangyu and Guo, Xiangzhou and Li, Chengming and Xu, Xiaohai and Yu, Kun and Yuan, Yuxing and Zou, Yinghao and Long, Jiquan and Cai, Yudong and Li, Zhenxiang and Zhang, Zhifeng and Mo, Yihua and Gu, Jun and Jiang, Ruiyi and Wei, Yi and Xie, Charles},
title = {Milvus: A Purpose-Built Vector Data Management System},
year = {2021},
isbn = {9781450383431},
publisher = {Association for Computing Machinery},
address = {New York, NY, USA},
url = {https://doi.org/10.1145/3448016.3457550},
doi = {10.1145/3448016.3457550},
booktitle = {Proceedings of the 2021 International Conference on Management of Data},
pages = {2614–2627},
numpages = {14},
location = {Virtual Event, China},
series = {SIGMOD '21}
}

@inproceedings{plewi_neurips20,
author = {Lewis, Patrick and Perez, Ethan and Piktus, Aleksandra and Petroni, Fabio and Karpukhin, Vladimir and Goyal, Naman and K\"{u}ttler, Heinrich and Lewis, Mike and Yih, Wen-tau and Rockt\"{a}schel, Tim and Riedel, Sebastian and Kiela, Douwe},
title = {Retrieval-augmented generation for knowledge-intensive NLP tasks},
year = {2020},
isbn = {9781713829546},
publisher = {Curran Associates Inc.},
address = {Red Hook, NY, USA},
booktitle = {Proceedings of the 34th International Conference on Neural Information Processing Systems},
articleno = {793},
numpages = {16},
location = {Vancouver, BC, Canada},
series = {NIPS '20}
}

@inproceedings{rying_kdd18,
author = {Ying, Rex and He, Ruining and Chen, Kaifeng and Eksombatchai, Pong and Hamilton, William L. and Leskovec, Jure},
title = {Graph Convolutional Neural Networks for Web-Scale Recommender Systems},
year = {2018},
isbn = {9781450355520},
publisher = {Association for Computing Machinery},
address = {New York, NY, USA},
url = {https://doi.org/10.1145/3219819.3219890},
doi = {10.1145/3219819.3219890},
booktitle = {Proceedings of the 24th ACM SIGKDD International Conference on Knowledge Discovery \& Data Mining},
pages = {974–983},
numpages = {10},
location = {London, United Kingdom},
series = {KDD '18}
}

@article{yzhu_sigmod24,
author = {Zhu, Yifan and Ma, Ruiyao and Zheng, Baihua and Ke, Xiangyu and Chen, Lu and Gao, Yunjun},
title = {GTS: GPU-based Tree Index for Fast Similarity Search},
year = {2024},
issue_date = {June 2024},
publisher = {Association for Computing Machinery},
address = {New York, NY, USA},
volume = {2},
number = {3},
url = {https://doi.org/10.1145/3654945},
doi = {10.1145/3654945},
journal = {Proc. ACM Manag. Data},
month = may,
articleno = {142},
numpages = {27}
}

@inproceedings{hwang_cikm21,
author = {Wang, Hui and Zhao, Wan-Lei and Zeng, Xiangxiang and Yang, Jianye},
title = {Fast k-NN Graph Construction by GPU based NN-Descent},
year = {2021},
isbn = {9781450384469},
publisher = {Association for Computing Machinery},
address = {New York, NY, USA},
url = {https://doi.org/10.1145/3459637.3482344},
doi = {10.1145/3459637.3482344},
booktitle = {Proceedings of the 30th ACM International Conference on Information \& Knowledge Management},
pages = {1929–1938},
numpages = {10},
location = {Virtual Event, Queensland, Australia},
series = {CIKM '21}
}

@article{zli_sigmod25,
author = {Li, Zhonggen and Ke, Xiangyu and Zhu, Yifan and Yu, Bocheng and Zheng, Baihua and Gao, Yunjun},
title = {Scalable Graph Indexing using GPUs for Approximate Nearest Neighbor Search},
year = {2025},
issue_date = {December 2025},
publisher = {Association for Computing Machinery},
address = {New York, NY, USA},
volume = {3},
number = {6},
url = {https://doi.org/10.1145/3769825},
doi = {10.1145/3769825},
journal = {Proc. ACM Manag. Data},
month = dec,
articleno = {360},
numpages = {27}
}

@ARTICLE{kvenk_tbd25,
  author={Venkatasubba, Karthik and Khan, Saim and Singh, Somesh and Simhadri, Harsha Vardhan and Vedurada, Jyothi},
  journal={IEEE Transactions on Big Data}, 
  title={BANG: Billion-Scale Approximate Nearest Neighbour Search Using a Single GPU}, 
  year={2025},
  volume={11},
  number={6},
  pages={3142-3157},
  doi={10.1109/TBDATA.2025.3581085}}

@inproceedings{wzhao_icde20,
  author       = {Weijie Zhao and
                  Shulong Tan and
                  Ping Li},
  title        = {{SONG:} Approximate Nearest Neighbor Search on {GPU}},
  booktitle    = {36th {IEEE} International Conference on Data Engineering, {ICDE} 2020,
                  Dallas, TX, USA, April 20-24, 2020},
  pages        = {1033--1044},
  publisher    = {{IEEE}},
  year         = {2020},
  url          = {https://doi.org/10.1109/ICDE48307.2020.00094},
  doi          = {10.1109/ICDE48307.2020.00094},
  bibsource    = {dblp computer science bibliography, https://dblp.org}
}

@article{mwang_sigmod24,
author = {Wang, Mengzhao and Xu, Weizhi and Yi, Xiaomeng and Wu, Songlin and Peng, Zhangyang and Ke, Xiangyu and Gao, Yunjun and Xu, Xiaoliang and Guo, Rentong and Xie, Charles},
title = {Starling: An I/O-Efficient Disk-Resident Graph Index Framework for High-Dimensional Vector Similarity Search on Data Segment},
year = {2024},
issue_date = {February 2024},
publisher = {Association for Computing Machinery},
address = {New York, NY, USA},
volume = {2},
number = {1},
url = {https://doi.org/10.1145/3639269},
doi = {10.1145/3639269},
journal = {Proc. ACM Manag. Data},
month = mar,
articleno = {14},
numpages = {27}
}

@article{cwei_vldb20,
author = {Wei, Chuangxian and Wu, Bin and Wang, Sheng and Lou, Renjie and Zhan, Chaoqun and Li, Feifei and Cai, Yuanzhe},
title = {AnalyticDB-V: a hybrid analytical engine towards query fusion for structured and unstructured data},
year = {2020},
issue_date = {August 2020},
publisher = {VLDB Endowment},
volume = {13},
number = {12},
issn = {2150-8097},
url = {https://doi.org/10.14778/3415478.3415541},
doi = {10.14778/3415478.3415541},
journal = {Proc. VLDB Endow.},
month = aug,
pages = {3152–3165},
numpages = {14}
}

@inproceedings{qzhan_osdi23,
author = {Qianxi Zhang and Shuotao Xu and Qi Chen and Guoxin Sui and Jiadong Xie and Zhizhen Cai and Yaoqi Chen and Yinxuan He and Yuqing Yang and Fan Yang and Mao Yang and Lidong Zhou},
title = {{VBASE}: Unifying Online Vector Similarity Search and Relational Queries via Relaxed Monotonicity},
booktitle = {17th USENIX Symposium on Operating Systems Design and Implementation (OSDI 23)},
year = {2023},
isbn = {978-1-939133-34-2},
address = {Boston, MA},
pages = {377--395},
url = {https://www.usenix.org/conference/osdi23/presentation/zhang-qianxi},
publisher = {USENIX Association},
month = jul
}

@article{lpate_sigmod24,
author = {Patel, Liana and Kraft, Peter and Guestrin, Carlos and Zaharia, Matei},
title = {ACORN: Performant and Predicate-Agnostic Search Over Vector Embeddings and Structured Data},
year = {2024},
issue_date = {June 2024},
publisher = {Association for Computing Machinery},
address = {New York, NY, USA},
volume = {2},
number = {3},
url = {https://doi.org/10.1145/3654923},
doi = {10.1145/3654923},
journal = {Proc. ACM Manag. Data},
month = may,
articleno = {120},
numpages = {27}
}

@inproceedings{sgoll_www23,
author = {Gollapudi, Siddharth and Karia, Neel and Sivashankar, Varun and Krishnaswamy, Ravishankar and Begwani, Nikit and Raz, Swapnil and Lin, Yiyong and Zhang, Yin and Mahapatro, Neelam and Srinivasan, Premkumar and Singh, Amit and Simhadri, Harsha Vardhan},
title = {Filtered-DiskANN: Graph Algorithms for Approximate Nearest Neighbor Search with Filters},
year = {2023},
isbn = {9781450394161},
publisher = {Association for Computing Machinery},
address = {New York, NY, USA},
url = {https://doi.org/10.1145/3543507.3583552},
doi = {10.1145/3543507.3583552},
booktitle = {Proceedings of the ACM Web Conference 2023},
pages = {3406–3416},
numpages = {11},
location = {Austin, TX, USA},
series = {WWW '23}
}

@inproceedings{jruan_kdd25,
author = {Ruan, Jiancheng and Chen, Tingyang and Yang, Renchi and Ke, Xiangyu and Gao, Yunjun},
title = {Empowering Graph-based Approximate Nearest Neighbor Search with Adaptive Awareness Capabilities},
year = {2025},
isbn = {9798400714542},
publisher = {Association for Computing Machinery},
address = {New York, NY, USA},
url = {https://doi.org/10.1145/3711896.3736930},
doi = {10.1145/3711896.3736930},
booktitle = {Proceedings of the 31st ACM SIGKDD Conference on Knowledge Discovery and Data Mining V.2},
pages = {2444–2454},
numpages = {11},
location = {Toronto ON, Canada},
series = {KDD '25}
}

@inproceedings{ypan_bigdata23,
  author       = {Yu Pan and
                  Jianxin Sun and
                  Hongfeng Yu},
  editor       = {Jingrui He and
                  Themis Palpanas and
                  Xiaohua Hu and
                  Alfredo Cuzzocrea and
                  Dejing Dou and
                  Dominik Slezak and
                  Wei Wang and
                  Aleksandra Gruca and
                  Jerry Chun{-}Wei Lin and
                  Rakesh Agrawal},
  title        = {LM-DiskANN: Low Memory Footprint in Disk-Native Dynamic Graph-Based
                  {ANN} Indexing},
  booktitle    = {{IEEE} International Conference on Big Data, BigData 2023, Sorrento,
                  Italy, December 15-18, 2023},
  pages        = {5987--5996},
  publisher    = {{IEEE}},
  year         = {2023},
  url          = {https://doi.org/10.1109/BigData59044.2023.10386517},
  doi          = {10.1109/BIGDATA59044.2023.10386517},
  bibsource    = {dblp computer science bibliography, https://dblp.org}
}

@inproceedings{yxu_sosp23,
author = {Xu, Yuming and Liang, Hengyu and Li, Jin and Xu, Shuotao and Chen, Qi and Zhang, Qianxi and Li, Cheng and Yang, Ziyue and Yang, Fan and Yang, Yuqing and Cheng, Peng and Yang, Mao},
title = {SPFresh: Incremental In-Place Update for Billion-Scale Vector Search},
year = {2023},
isbn = {9798400702297},
publisher = {Association for Computing Machinery},
address = {New York, NY, USA},
url = {https://doi.org/10.1145/3600006.3613166},
doi = {10.1145/3600006.3613166},
booktitle = {Proceedings of the 29th Symposium on Operating Systems Principles},
pages = {545–561},
numpages = {17},
location = {Koblenz, Germany},
series = {SOSP '23}
}

@misc{asing_arxiv21,
      title={FreshDiskANN: A Fast and Accurate Graph-Based ANN Index for Streaming Similarity Search}, 
      author={Aditi Singh and Suhas Jayaram Subramanya and Ravishankar Krishnaswamy and Harsha Vardhan Simhadri},
      year={2021},
      eprint={2105.09613},
      archivePrefix={arXiv},
      primaryClass={cs.IR},
      url={https://arxiv.org/abs/2105.09613}, 
}

@misc{mdouz_arxiv24,
      title={The Faiss library}, 
      author={Matthijs Douze and Alexandr Guzhva and Chengqi Deng and Jeff Johnson and Gergely Szilvasy and Pierre-Emmanuel Mazaré and Maria Lomeli and Lucas Hosseini and Hervé Jégou},
      year={2025},
      eprint={2401.08281},
      archivePrefix={arXiv},
      primaryClass={cs.LG},
      url={https://arxiv.org/abs/2401.08281}, 
}

@InProceedings{jmart_eccv18,
author = {Martinez, Julieta and Zakhmi, Shobhit and Hoos, Holger H. and Little, James J.},
title = {LSQ++: Lower running time and higher recall in multi-codebook quantization},
booktitle = {Proceedings of the European Conference on Computer Vision (ECCV)},
month = {September},
year = {2018}
}

@InProceedings{rguo_icml20,
  title = 	 {Accelerating Large-Scale Inference with Anisotropic Vector Quantization},
  author =       {Guo, Ruiqi and Sun, Philip and Lindgren, Erik and Geng, Quan and Simcha, David and Chern, Felix and Kumar, Sanjiv},
  booktitle = 	 {Proceedings of the 37th International Conference on Machine Learning},
  pages = 	 {3887--3896},
  year = 	 {2020},
  editor = 	 {III, Hal Daumé and Singh, Aarti},
  volume = 	 {119},
  series = 	 {Proceedings of Machine Learning Research},
  month = 	 {13--18 Jul},
  publisher =    {PMLR},
  url = 	 {https://proceedings.mlr.press/v119/guo20h.html},
}

@article{jgao_sigmod24,
author = {Gao, Jianyang and Long, Cheng},
title = {RaBitQ: Quantizing High-Dimensional Vectors with a Theoretical Error Bound for Approximate Nearest Neighbor Search},
year = {2024},
issue_date = {June 2024},
publisher = {Association for Computing Machinery},
address = {New York, NY, USA},
volume = {2},
number = {3},
url = {https://doi.org/10.1145/3654970},
doi = {10.1145/3654970},
journal = {Proc. ACM Manag. Data},
month = may,
articleno = {167},
numpages = {27}
}

@article{tge_tpami13,
author = {Ge, Tiezheng and He, Kaiming and Ke, Qifa and Sun, Jian},
title = {Optimized Product Quantization},
year = {2014},
issue_date = {April 2014},
publisher = {IEEE Computer Society},
address = {USA},
volume = {36},
number = {4},
issn = {0162-8828},
url = {https://doi.org/10.1109/TPAMI.2013.240},
doi = {10.1109/TPAMI.2013.240},
journal = {IEEE Trans. Pattern Anal. Mach. Intell.},
month = apr,
pages = {744–755},
numpages = {12}
}

@ARTICLE{mmuja_tpami14,
author={Muja, Marius and Lowe, David G.},
journal={ IEEE Transactions on Pattern Analysis \& Machine Intelligence },
title={{ Scalable Nearest Neighbor Algorithms for High Dimensional Data }},
year={2014},
volume={36},
number={11},
ISSN={1939-3539},
pages={2227-2240},
doi={10.1109/TPAMI.2014.2321376},
url = {https://doi.ieeecomputersociety.org/10.1109/TPAMI.2014.2321376},
publisher={IEEE Computer Society},
address={Los Alamitos, CA, USA},
month=nov}

@article{jbent_cacm75,
author = {Bentley, Jon Louis},
title = {Multidimensional binary search trees used for associative searching},
year = {1975},
issue_date = {Sept. 1975},
publisher = {Association for Computing Machinery},
address = {New York, NY, USA},
volume = {18},
number = {9},
issn = {0001-0782},
url = {https://doi.org/10.1145/361002.361007},
doi = {10.1145/361002.361007},
journal = {Commun. ACM},
month = sep,
pages = {509–517},
numpages = {9}
}

@inproceedings{aando_neurips15,
 author = {Andoni, Alexandr and Indyk, Piotr and Laarhoven, Thijs and Razenshteyn, Ilya and Schmidt, Ludwig},
 booktitle = {Advances in Neural Information Processing Systems},
 editor = {C. Cortes and N. Lawrence and D. Lee and M. Sugiyama and R. Garnett},
 pages = {},
 publisher = {Curran Associates, Inc.},
 title = {Practical and Optimal LSH for Angular Distance},
 url = {https://proceedings.neurips.cc/paper_files/paper/2015/file/2823f4797102ce1a1aec05359cc16dd9-Paper.pdf},
 volume = {28},
 year = {2015}
}

@inproceedings{qlv_vldb07,
author = {Lv, Qin and Josephson, William and Wang, Zhe and Charikar, Moses and Li, Kai},
title = {Multi-probe LSH: efficient indexing for high-dimensional similarity search},
year = {2007},
isbn = {9781595936493},
publisher = {VLDB Endowment},
booktitle = {Proceedings of the 33rd International Conference on Very Large Data Bases},
pages = {950–961},
numpages = {12},
location = {Vienna, Austria},
series = {VLDB '07}
}

@inproceedings{mdata_socg04,
author = {Datar, Mayur and Immorlica, Nicole and Indyk, Piotr and Mirrokni, Vahab S.},
title = {Locality-sensitive hashing scheme based on p-stable distributions},
year = {2004},
isbn = {1581138857},
publisher = {Association for Computing Machinery},
address = {New York, NY, USA},
url = {https://doi.org/10.1145/997817.997857},
doi = {10.1145/997817.997857},
booktitle = {Proceedings of the Twentieth Annual Symposium on Computational Geometry},
pages = {253–262},
numpages = {10},
location = {Brooklyn, New York, USA},
series = {SCG '04}
}

@inproceedings{pindy_stoc98,
author = {Indyk, Piotr and Motwani, Rajeev},
title = {Approximate nearest neighbors: towards removing the curse of dimensionality},
year = {1998},
isbn = {0897919629},
publisher = {Association for Computing Machinery},
address = {New York, NY, USA},
url = {https://doi.org/10.1145/276698.276876},
doi = {10.1145/276698.276876},
booktitle = {Proceedings of the Thirtieth Annual ACM Symposium on Theory of Computing},
pages = {604–613},
numpages = {10},
location = {Dallas, Texas, USA},
series = {STOC '98}
}

@inproceedings{mmano_ppopp24,
author = {Manohar, Magdalen Dobson and Shen, Zheqi and Blelloch, Guy and Dhulipala, Laxman and Gu, Yan and Simhadri, Harsha Vardhan and Sun, Yihan},
title = {ParlayANN: Scalable and Deterministic Parallel Graph-Based Approximate Nearest Neighbor Search Algorithms},
year = {2024},
isbn = {9798400704352},
publisher = {Association for Computing Machinery},
address = {New York, NY, USA},
url = {https://doi.org/10.1145/3627535.3638475},
doi = {10.1145/3627535.3638475},
booktitle = {Proceedings of the 29th ACM SIGPLAN Annual Symposium on Principles and Practice of Parallel Programming},
pages = {270–285},
numpages = {16},
location = {Edinburgh, United Kingdom},
series = {PPoPP '24}
}

@inproceedings{nono_mm23,
author = {Ono, Naoki and Matsui, Yusuke},
title = {Relative NN-Descent: A Fast Index Construction for Graph-Based Approximate Nearest Neighbor Search},
year = {2023},
isbn = {9798400701085},
publisher = {Association for Computing Machinery},
address = {New York, NY, USA},
url = {https://doi.org/10.1145/3581783.3612290},
doi = {10.1145/3581783.3612290},
booktitle = {Proceedings of the 31st ACM International Conference on Multimedia},
pages = {1659–1667},
numpages = {9},
location = {Ottawa ON, Canada},
series = {MM '23}
}

@misc{cfu_arxiv16,
      title={EFANNA : An Extremely Fast Approximate Nearest Neighbor Search Algorithm Based on kNN Graph}, 
      author={Cong Fu and Deng Cai},
      year={2016},
      eprint={1609.07228},
      archivePrefix={arXiv},
      primaryClass={cs.CV},
      url={https://arxiv.org/abs/1609.07228}, 
}

@inproceedings{wdong_www11,
author = {Dong, Wei and Moses, Charikar and Li, Kai},
title = {Efficient k-nearest neighbor graph construction for generic similarity measures},
year = {2011},
isbn = {9781450306324},
publisher = {Association for Computing Machinery},
address = {New York, NY, USA},
url = {https://doi.org/10.1145/1963405.1963487},
doi = {10.1145/1963405.1963487},
booktitle = {Proceedings of the 20th International Conference on World Wide Web},
pages = {577–586},
numpages = {10},
location = {Hyderabad, India},
series = {WWW '11}
}

@article{jmuno_pr19,
author = {Vargas Mu\~{n}oz, Javier and Gon\c{c}alves, Marcos A. and Dias, Zanoni and da S. Torres, Ricardo},
title = {Hierarchical Clustering-Based Graphs for Large Scale Approximate Nearest Neighbor Search},
year = {2019},
issue_date = {Dec 2019},
publisher = {Elsevier Science Inc.},
address = {USA},
volume = {96},
number = {C},
issn = {0031-3203},
url = {https://doi.org/10.1016/j.patcog.2019.106970},
doi = {10.1016/j.patcog.2019.106970},
journal = {Pattern Recogn.},
month = dec,
numpages = {10}
}

@InProceedings{bharw_cvpr16,
author = {Harwood, Ben and Drummond, Tom},
title = {FANNG: Fast Approximate Nearest Neighbour Graphs},
booktitle = {Proceedings of the IEEE Conference on Computer Vision and Pattern Recognition (CVPR)},
month = {June},
year = {2016}
}

@article{cfu_vldb19,
author = {Fu, Cong and Xiang, Chao and Wang, Changxu and Cai, Deng},
title = {Fast approximate nearest neighbor search with the navigating spreading-out graph},
year = {2019},
issue_date = {January 2019},
publisher = {VLDB Endowment},
volume = {12},
number = {5},
issn = {2150-8097},
url = {https://doi.org/10.14778/3303753.3303754},
doi = {10.14778/3303753.3303754},
journal = {Proc. VLDB Endow.},
month = jan,
pages = {461–474},
numpages = {14}
}

@article{ymalk_is14,
title = {Approximate nearest neighbor algorithm based on navigable small world graphs},
journal = {Information Systems},
volume = {45},
pages = {61-68},
year = {2014},
issn = {0306-4379},
doi = {https://doi.org/10.1016/j.is.2013.10.006},
url = {https://www.sciencedirect.com/science/article/pii/S0306437913001300},
author = {Yury Malkov and Alexander Ponomarenko and Andrey Logvinov and Vladimir Krylov},
}

@INPROCEEDINGS{deep,
  author={Yandex, Artem Babenko and Lempitsky, Victor},
  booktitle={2016 IEEE Conference on Computer Vision and Pattern Recognition (CVPR)}, 
  title={Efficient Indexing of Billion-Scale Datasets of Deep Descriptors}, 
  year={2016},
  volume={},
  number={},
  pages={2055-2063},
  doi={10.1109/CVPR.2016.226}}

@ARTICLE{sift,
  author={Jégou, Herve and Douze, Matthijs and Schmid, Cordelia},
  journal={IEEE Transactions on Pattern Analysis and Machine Intelligence}, 
  title={Product Quantization for Nearest Neighbor Search}, 
  year={2011},
  volume={33},
  number={1},
  pages={117-128},
  doi={10.1109/TPAMI.2010.57}}

@article{digra,
author = {Jiang, Mengxu and Yang, Zhi and Zhang, Fangyuan and Hou, Guanhao and Shi, Jieming and Zhou, Wenchao and Li, Feifei and Wang, Sibo},
title = {DIGRA: A Dynamic Graph Indexing for Approximate Nearest Neighbor Search with Range Filter},
year = {2025},
issue_date = {June 2025},
publisher = {Association for Computing Machinery},
address = {New York, NY, USA},
volume = {3},
number = {3},
url = {https://doi.org/10.1145/3725399},
doi = {10.1145/3725399},
journal = {Proc. ACM Manag. Data},
month = jun,
articleno = {148},
numpages = {26}
}

@INPROCEEDINGS {cagra,
author = { Ootomo, Hiroyuki and Naruse, Akira and Nolet, Corey and Wang, Ray and Feher, Tamas and Wang, Yong },
booktitle = { 2024 IEEE 40th International Conference on Data Engineering (ICDE) },
title = {{ CAGRA: Highly Parallel Graph Construction and Approximate Nearest Neighbor Search for GPUs }},
year = {2024},
volume = {},
ISSN = {},
pages = {4236-4247},
doi = {10.1109/ICDE60146.2024.00323},
url = {https://doi.ieeecomputersociety.org/10.1109/ICDE60146.2024.00323},
publisher = {IEEE Computer Society},
address = {Los Alamitos, CA, USA},
month =May}

@article{hnsw,
author = {Malkov, Yu A. and Yashunin, D. A.},
title = {Efficient and Robust Approximate Nearest Neighbor Search Using Hierarchical Navigable Small World Graphs},
year = {2020},
issue_date = {April 2020},
publisher = {IEEE Computer Society},
address = {USA},
volume = {42},
number = {4},
issn = {0162-8828},
url = {https://doi.org/10.1109/TPAMI.2018.2889473},
doi = {10.1109/TPAMI.2018.2889473},
journal = {IEEE Trans. Pattern Anal. Mach. Intell.},
month = apr,
pages = {824–836},
numpages = {13}
}

@inproceedings{diskann,
 author = {Jayaram Subramanya, Suhas and Devvrit, Fnu and Simhadri, Harsha Vardhan and Krishnawamy, Ravishankar and Kadekodi, Rohan},
 booktitle = {Advances in Neural Information Processing Systems},
 editor = {H. Wallach and H. Larochelle and A. Beygelzimer and F. d\textquotesingle Alch\'{e}-Buc and E. Fox and R. Garnett},
 pages = {},
 publisher = {Curran Associates, Inc.},
 title = {DiskANN: Fast Accurate Billion-point Nearest Neighbor Search on a Single Node},
 url = {https://proceedings.neurips.cc/paper_files/paper/2019/file/09853c7fb1d3f8ee67a61b6bf4a7f8e6-Paper.pdf},
 volume = {32},
 year = {2019}
}

@article{steiner_hardness,
  author       = {Zeyu Wang and Qitong Wang and Xiaoxing Cheng and Peng Wang and Themis Palpanas and Wei Wang},
  title        = {Steiner-Hardness: A Query Hardness Measure for Graph-Based {ANN} Indexes},
  journal      = {Proc. {VLDB} Endow.},
  volume       = {17},
  number       = {13},
  pages        = {4668--4682},
  year         = {2024},
  doi          = {10.14778/3704965.3704974}
}

@inproceedings{amsaleg_lid,
  author       = {Laurent Amsaleg and Oussama Chelly and Teddy Furon and St\'{e}phane Girard and Michael E. Houle and Ken-ichi Kawarabayashi and Michael Nett},
  title        = {Estimating Local Intrinsic Dimensionality},
  booktitle    = {Proceedings of the 21th {ACM} {SIGKDD} International Conference on Knowledge Discovery and Data Mining ({KDD})},
  pages        = {29--38},
  publisher    = {ACM},
  year         = {2015}
}

@inproceedings{houle_lid,
  author       = {Michael E. Houle},
  title        = {Local Intrinsic Dimensionality {I}: An Extreme-Value-Theoretic Foundation for Similarity Applications},
  booktitle    = {Similarity Search and Applications ({SISAP})},
  series       = {Lecture Notes in Computer Science},
  volume       = {10609},
  pages        = {64--79},
  publisher    = {Springer},
  year         = {2017}
}

@inproceedings{aumuller_lid_bench,
  author       = {Martin Aum\"{u}ller and Matteo Ceccarello},
  title        = {The Role of Local Intrinsic Dimensionality in Benchmarking Nearest Neighbor Search},
  booktitle    = {Similarity Search and Applications ({SISAP})},
  series       = {Lecture Notes in Computer Science},
  volume       = {11807},
  pages        = {113--127},
  publisher    = {Springer},
  year         = {2019}
}

@article{radovanovic_hubness,
  author       = {Milos Radovanovi\'{c} and Alexandros Nanopoulos and Mirjana Ivanovi\'{c}},
  title        = {Hubs in Space: Popular Nearest Neighbors in High-Dimensional Data},
  journal      = {Journal of Machine Learning Research},
  volume       = {11},
  pages        = {2487--2531},
  year         = {2010}
}

@inproceedings{li_adaptive_et,
  author       = {Conglong Li and Minjia Zhang and David G. Andersen and Yuxiong He},
  title        = {Improving Approximate Nearest Neighbor Search through Learned Adaptive Early Termination},
  booktitle    = {Proceedings of the 2020 {ACM} {SIGMOD} International Conference on Management of Data},
  pages        = {2539--2554},
  publisher    = {ACM},
  year         = {2020}
}

@misc{tao,
  author       = {Kaixiang Yang and Hongya Wang and Bo Xu and Wei Wang and Yingyuan Xiao and Ming Du and Junfeng Zhou},
  title        = {Tao: A Learning Framework for Adaptive Nearest Neighbor Search using Static Features Only},
  year         = {2021},
  eprint       = {2110.00696},
  archivePrefix = {arXiv},
  primaryClass = {cs.DB},
  howpublished = {arXiv:2110.00696}
}

@misc{adaptive_beam,
  author       = {Yousef Al-Jazzazi and Haya Diwan and Jinrui Gou and Cameron Musco and Christopher Musco and Torsten Suel},
  title        = {Distance Adaptive Beam Search for Provably Accurate Graph-Based Nearest Neighbor Search},
  year         = {2025},
  eprint       = {2505.15636},
  archivePrefix = {arXiv},
  primaryClass = {cs.DS},
  howpublished = {arXiv:2505.15636}
}

@inproceedings{spacev,
  author       = {Harsha Vardhan Simhadri and George Williams and Martin Aum{\"{u}}ller and Matthijs Douze and Artem Babenko and Dmitry Baranchuk and Qi Chen and Lucas Hosseini and Ravishankar Krishnaswamy and Gopal Srinivasa and Suhas Jayaram Subramanya and Jingdong Wang},
  title        = {Results of the {NeurIPS'21} Challenge on Billion-Scale Approximate Nearest Neighbor Search},
  booktitle    = {Proceedings of the {NeurIPS} 2021 Competitions and Demonstrations Track},
  series       = {Proceedings of Machine Learning Research},
  volume       = {176},
  pages        = {177--189},
  publisher    = {PMLR},
  year         = {2021}
}

@inproceedings{imagenet,
  author       = {Jia Deng and Wei Dong and Richard Socher and Li-Jia Li and Kai Li and Li Fei-Fei},
  title        = {{ImageNet}: A Large-Scale Hierarchical Image Database},
  booktitle    = {2009 {IEEE} Conference on Computer Vision and Pattern Recognition ({CVPR})},
  pages        = {248--255},
  year         = {2009},
  doi          = {10.1109/CVPR.2009.5206848}
}
